\documentclass[twocolumn,twoside,pra,superscriptaddress]{revtex4}
\usepackage[utf8]{inputenc}

\usepackage[pdftex]{graphicx,color}
\DeclareGraphicsRule{.pdftex}{pdf}{*}{}
\usepackage{amsmath}
\usepackage{amssymb}
\usepackage{epsfig}
\usepackage{amsthm}
\usepackage{tikz}
\usepackage{wrapfig}
\usepackage{comment}
\usepackage{latexsym,revsymb}

\newcommand{\ket}[1]{| #1 \rangle}
\newcommand{\bra}[1]{\langle #1 |}
\newcommand{\av}[1]{\langle #1 \rangle}
\newcommand{\de}{d_{\mathrm{eff}}}
\newcommand{\f}[2]{\textstyle{\frac{#1}{#2}}}
\newcommand{\tr}[1]{\mathrm{tr}\! \left[#1\right]}
\newcommand{\abs}[1]{\left| #1 \right|} 
\newcommand{\avg}[1]{\left< #1 \right>} 
\newcommand{\norm}[1]{\| #1 \|} 
\newcommand{\proj}[1]{| #1 \rangle\langle #1 |}
\newcommand{\braket}[2]{\langle #1 | #2 \rangle}
\newcommand{\vertiii}[1]{{\left\vert\kern-0.25ex\left\vert\kern-0.25ex\left\vert #1 
    \right\vert\kern-0.25ex\right\vert\kern-0.25ex\right\vert}}

\newcommand{\bl}[1]{{\color{black} #1}}
\newcommand{\re}[1]{{\color{black} #1}}

\makeatletter
\newtheorem*{rep@theorem}{\rep@title}
\newcommand{\newreptheorem}[2]{%
\newenvironment{rep#1}[1]{%
 \def\rep@title{#2 \ref{##1}}%
 \begin{rep@theorem}}%
 {\end{rep@theorem}}}
\makeatother

\newtheorem{theorem}{Theorem}
\newreptheorem{theorem}{Theorem}

\newtheorem{corollary}{Corollary}
\newreptheorem{corollary}{Corollary}
\newtheorem{lemma}[theorem]{Lemma}
\newreptheorem{lemma}{Lemma}

\begin{document}
\title{Equilibration of quantum gases}
\author{Terry Farrelly}
\email{farreltc@tcd.ie}
\affiliation{Institut f{\"u}r Theoretische Physik, Leibniz Universit{\"a}t, Appelstra{\ss}e 2, 30167 Hannover, Germany}

\begin{abstract}
Finding equilibration times is a major unsolved problem in physics with few analytical results.  Here we look at equilibration times for quantum gases of bosons and fermions \bl{in the regime of negligibly weak interactions}, a setting which not only includes paradigmatic systems such as gases confined to boxes, but also Luttinger liquids and the \bl{free superfluid} Hubbard model.  To do this, we focus on \bl{two classes of measurements:\ (i)} coarse-grained observables, such as the number of particles in a region of space\bl{, and (ii) few-mode measurements, such as phase correlators} and correlation functions.  We show that, in this setting, equilibration occurs quite generally despite the fact that the particles are not interacting.  Furthermore, \bl{for coarse-grained measurements} the timescale is \bl{generally} at most polynomial in \bl{the number of particles} $N$, which is much faster than previous general upper bounds, which were exponential in $N$.  \bl{For local measurements on lattice systems, the timescale is typically linear in the number of lattice sites.  In fact, for one dimensional lattices, the scaling is generally linear in the length of the lattice, which is optimal.}  Additionally, we look at \bl{a few specific examples}, one of which \bl{consists of} $N$ fermions initially confined on one side of a partition in a box.  The partition is removed and the fermions equilibrate extremely quickly in time $O(1/N)$.
\end{abstract}

\maketitle
\section{Introduction}
\bl{Over} the past \bl{few} decade\bl{s}, there has been a major push to understand statistical physics by applying tools from quantum information.  One particularly pressing problem is understanding equilibration.  From everyday experience, we know it to be universal, as anything from a hot cup of tea to a spinning top will relax to a steady state eventually.  See figure \ref{fig:equilibration}.  However, our understanding of why equilibration occurs and how long it takes remains incomplete.  Progress has been dramatically helped by recent advances in experiments \cite{EFG14,LGS15}, where mesoscopic quantum systems can now be controlled extremely well, providing better and better playgrounds to probe properties of many-body systems.
\begin{figure}[ht!]
 \resizebox{6.5cm}{!}{\includegraphics{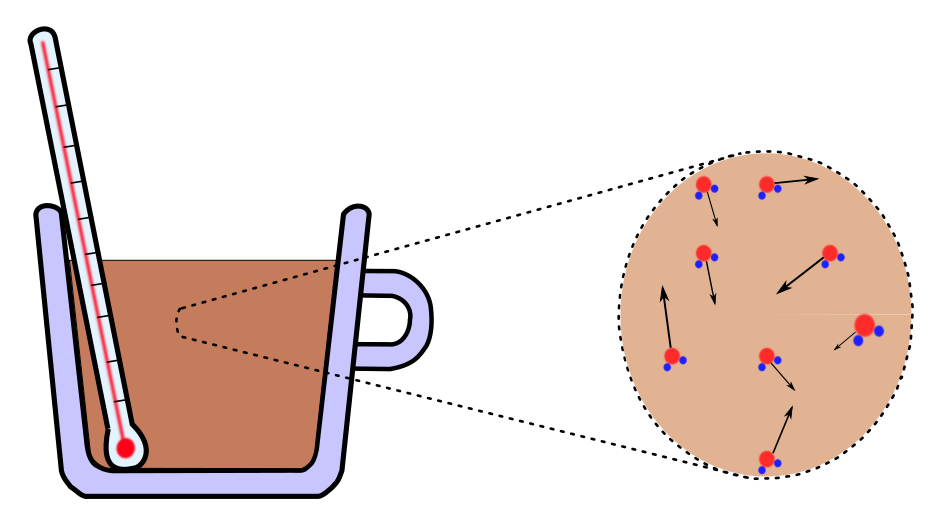}}
    \footnotesize{\caption[Equilibration]{Microscopically, a cup of tea is never in equilibrium:\ the molecules are constantly moving around, but we cannot measure this.  What we do measure is the temperature, according to which a hot cup of tea can reach equilibrium (room temperature).  This highlights an important point about equilibration, which is that it only occurs when we account for physical restrictions on what we can actually measure.\label{fig:equilibration}}}
\end{figure}

In \cite{Reimann08,LPSW09,Short10} it was proved that quantum systems will generally equilibrate with very weak assumptions on the Hamiltonian (which ensure, for one thing, that the system is not a collection of non-interacting subsystems).  But very little is known about the timescale.  This is crucial:\ if a system equilibrates but the timescale is the age of the universe, we will never actually observe it equilibrating in a lab.  Unfortunately, the best general upper bounds on the timescale \cite{SF12,RK12,MGLFS14} are far too large for even mesoscopic systems.  This is a consequence of the generality of the results.  Indeed, models were constructed effectively saturating these timescale bounds \cite{GHT13,MGLFS14}.

Imposing physical constraints on Hamiltonians and measurements has led to more realistic timescales in specific cases.  In fact, one of the earliest equilibration results was equilibration timescales for bosons evolving via the Hubbard Hamiltonian in the absence of interactions  \cite{CDEO08,CE10}.  More recently, equilibration timescales for small subsystems interacting with a large thermal bath were found \cite{GLMSW15}.  Along different lines, equilibration timescales were bounded by averaging over Hamiltonians, measurements or initial states \cite{UWE12,VZ12,Cramer12,BCHHKM12,MRA13,MGLFS14}.  For a review, see \cite{GE15}.

Here we will look at $N$ particle systems \bl{in the regime of negligible interactions} to see when equilibration occurs.  Experimentally, such situations appear often:\ Luttinger liquids \cite{RSLS15} are one example.

\bl{We look at two classes of measurement, which are natural for macroscopic and mesoscopic systems.  The first are coarse-grained measurements.}  These include the number of particles in some spatial region, the magnetization of fermions on a lattice, or the number of particles with different values of momentum.  The last of these arose in experiments with trapped Bose gases \cite{KWW06}, which\bl{,} in the limit of strong point-like interactions\bl{,} behave like free fermions.  \bl{The second type of measurements we consider are few-mode measurements.  Such measurements are crucial in many settings, and include correlation functions and phase correlators, which are important in ultracold atom experiments.}

\bl{First we will} look at some examples and then \bl{we will} show that equilibration of $N$ particle systems in this setting occurs quite generally and appears to be much faster than what general timescale bounds suggest.

\section{Equilibration}
Because there are recurrences for quantum systems with discrete spectra \cite{Bocchieri57,Schulman78}, the naive definition of equilibration as simply relaxation to a steady state is not sufficient.  Instead, we say a system equilibrates if it evolves towards a fixed state and stays close to it for most times.  To define what it means for two states to be close, we need a definition of distance between states.  For this to be realistic, we need to consider what measurements we can actually do.  For example, if we can do any measurement we want on a quantum system, then the distance between two states is best quantified by the trace distance, which allows us to calculate the maximum probability of distinguishing two states by doing a measurement \cite{NC00,Short10}.

In reality, for systems beyond a few qubits, there will be restrictions on the measurements we can do; for $10^{23}$ particles, clearly we are restricted to very coarse measurements.  With this in mind, a useful measure of distance is given by the distinguishability between states $\rho$ and $\sigma$, which is defined to be \cite{Short10}
\begin{equation}
 D_{\mathcal{M}}(\rho,\sigma)=\frac{1}{2}\max_{\{M_i\}\in\mathcal{M}}\sum_i\abs{\tr{\rho M_i}-\tr{\sigma M_i}},
\end{equation}
where $\mathcal{M}$ denotes the set of measurements we can do, and $\{M_i\}$ denotes a POVM measurement, with the positive operators $M_i$ satisfying $\sum_i M_i = \openone$.  \bl{POVM (Positive operator valued measure) measurements are more general than projective measurements.  This description may be necessary in situations where the measurement is not repeatable, for example.  Nevertheless, a POVM measurement is equivalent to a projective measurement on the system together with an ancilla \cite{NC00}.}

\bl{We denote} the infinite-time average of $\rho(t)$ \bl{by} $\avg{\rho}$.  \bl{I}f $D_{\mathcal{M}}(\rho(t),\avg{\rho})$ is small most of the time, \bl{then for all practical purposes $\rho(t)$ is indistinguishable from its time average $\av{\rho}$ most of the time.}  \bl{In that case,} equilibration has occurred.

Another notion of equilibration is equilibration of expectation values \cite{Reimann08}.  This works as follows.  Suppose we have the observable $M$ and we look at the quantity
\begin{equation}
 \Delta_M(t)=\frac{\abs{\tr{\rho(t) M}-\tr{\avg{\rho} M}}}{\|M\|},
\end{equation}
where $\|M\|$ is the operator norm of $M$.  This quantity tells us how close the expectation value of $M$ at time $t$ is to its time average, with the scale set by $\|M\|$.

If equilibration is to occur, we require that most of the time $\Delta_M(t)$ is smaller than some $\epsilon$, with $\epsilon$ chosen so that $\epsilon\|M\|$ is smaller than our experimental resolution.  

There is an important caveat here.  Even if expectation values equilibrate, we do not measure expectation values; we measure POVM outcomes.  In the examples we consider where equilibration of expectation values occurs, the fluctuations in measurement results are unobservably small.  This means that the measured value of $M$ is experimentally indistinguishable from $\tr{\rho(t) M}$ with extremely high probability.  Therefore, equilibration truly occurs.

\section{Gases of Bosons and Fermions}
The key step in getting estimates of the equilibration time for $N$ particle systems is equation (\ref{eq:18}) below, which will allow us to equate $\Delta_M(t)$ to the distinguishability for a single particle.

First, it will be useful to introduce some notation.  Let $\mathcal{H}$ be a single-particle Hilbert space, and let $\ket{i}$ denote an orthonormal basis.  Then we can define creation operators $a^{\dagger}_i$, acting on a fermionic Hilbert space, that create fermions corresponding to these states.  Equivalently, we may say $a^{\dagger}_i$ creates a fermion in mode $i$.  The fermionic Hilbert space is spanned by states with varying numbers of creation operators acting on $\ket{0}$, the empty state.  To avoid confusion, any state vectors written as kets are in the single-particle Hilbert space $\mathcal{H}$, with the exception of $\ket{0}$, which represents the empty state in a fermionic \bl{(or bosonic)} system.

The creation operator that creates a particle corresponding to the single-particle state $\ket{\psi}=\sum_i c_i \ket{i}$ is $a^{\dagger}(\ket{\psi})=\sum_i c_ia^{\dagger}_i$.  \bl{Suppose we have a single-particle Hamiltonian with discrete spectrum,
\begin{equation}
 H=\sum_{E}E \ket{E}\bra{E},
\end{equation}
where $E$ labels the energies.  There is a corresponding fermionic Hamiltonian, given by
\begin{equation}
 H_f=\sum_{E}E\, a^{\dagger}(\ket{E})a(\ket{E}).
\end{equation}
For any single-particle state $\ket{\psi}$, we also have
\begin{equation}
 e^{-iH_ft}a^{\dagger}(\ket{\psi})e^{iH_f t}=a^{\dagger}(e^{-iHt}\ket{\psi})=a^{\dagger}(\ket{\psi(t)}).
\end{equation}
The situation for bosons is similar.  The only difference is that, while fermionic creation and annihilation operators obey the canonical anti-commutation relations, bosonic creation and annihilation operators obey the canonical commutation relations.}

This is the basic idea behind second quantization, which allows one to take a single-particle system and upgrade it to a multi particle system \cite{BR97}.  \bl{Our goal here is to go in the opposite direction and to study equilibration of many-particle systems by moving to the single-particle picture.  Let us now give a useful simplification for free bosons or fermions.
\begin{theorem}\label{th:23}
 Take a state $\rho(t)=U(t)\rho(0)U^{\dagger}(t)$ of $N$ non-interacting bosons or fermions and a measurement operator counting the number of particles in some orthogonal modes $M=\sum_ib^{\dagger}_ib_i$, where $b_i=a(\ket{\phi_i})$. Then, there exist orthonormal single-particle states $\ket{\psi_{\alpha}(t)}$, evolving via the corresponding single-particle Hamiltonian, such that
 \begin{equation}\label{eq:18}
 \tr{\rho(t) M}= \sum_{\alpha}n_{\alpha}\tr{\psi_{\alpha}(t) P},
\end{equation}
where $n_{\alpha}$ are occupation numbers adding up to $N$, $P=\sum_i\ket{\phi_i}\bra{\phi_i}$ and $\psi_{\alpha}(t)=\ket{\psi_{\alpha}(t)}\bra{\psi_{\alpha}(t)}$.
\end{theorem}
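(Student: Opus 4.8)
The plan is to reduce everything to the one-body reduced density matrix (1-RDM), exploiting the fact that $M$ is \emph{quadratic} in creation and annihilation operators, so its expectation value depends on $\rho(t)$ only through the two-point function. Concretely, I would define an operator $\Gamma(t)$ on the single-particle space $\mathcal{H}$ by
\[\bra{\chi}\Gamma(t)\ket{\eta}=\tr{\rho(t)\,a^{\dagger}(\ket{\eta})\,a(\ket{\chi})}\]
for arbitrary single-particle states $\ket{\chi},\ket{\eta}$. Since $b_i^{\dagger}b_i=a^{\dagger}(\ket{\phi_i})a(\ket{\phi_i})$, summing over $i$ gives $\tr{\rho(t)M}=\sum_i\bra{\phi_i}\Gamma(t)\ket{\phi_i}$, which equals the single-particle trace $\tr{\Gamma(t)P}$ with $P=\sum_i\proj{\phi_i}$ after inserting a resolution of the identity. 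Note this step holds identically for bosons and fermions, because only the two-point function enters and the (anti)commutation relations never appear.

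The next step is to show that $\Gamma(t)$ evolves by the single-particle Hamiltonian. Writing $\rho(t)=e^{-iH_f t}\rho(0)e^{iH_f t}$, using cyclicity of the trace, and moving the evolution onto the operators via the relation $e^{-iH_f t}a^{\dagger}(\ket{\psi})e^{iH_f t}=a^{\dagger}(\ket{\psi(t)})$ from the excerpt (together with its adjoint), I obtain
\[\bra{\chi}\Gamma(t)\ket{\eta}=\tr{\rho(0)\,a^{\dagger}(e^{iHt}\ket{\eta})\,a(e^{iHt}\ket{\chi})}=\bra{\chi}e^{-iHt}\Gamma(0)e^{iHt}\ket{\eta},\]
so that $\Gamma(t)=e^{-iHt}\Gamma(0)e^{iHt}$: the many-body dynamics has collapsed onto single-particle dynamics.

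Finally, I would invoke the spectral theorem. The operator $\Gamma(0)$ is Hermitian and positive semidefinite (for any $\ket{\chi}$ one has $\bra{\chi}\Gamma(0)\ket{\chi}=\tr{\rho(0)\,a^{\dagger}(\ket{\chi})a(\ket{\chi})}\ge 0$), so I can write $\Gamma(0)=\sum_{\alpha}n_{\alpha}\proj{\psi_{\alpha}(0)}$ with orthonormal eigenvectors $\ket{\psi_{\alpha}(0)}$ (the natural orbitals) and eigenvalues $n_{\alpha}\ge 0$. These eigenvalues are the occupation numbers, and $\sum_{\alpha}n_{\alpha}=\tr{\Gamma(0)}=\tr{\rho(0)\sum_k a_k^{\dagger}a_k}=N$ since $\rho(0)$ lies in the $N$-particle sector. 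Setting $\ket{\psi_{\alpha}(t)}=e^{-iHt}\ket{\psi_{\alpha}(0)}$, which stays orthonormal and evolves via the single-particle Hamiltonian, gives $\Gamma(t)=\sum_{\alpha}n_{\alpha}\proj{\psi_{\alpha}(t)}$; substituting into $\tr{\Gamma(t)P}$ yields $\tr{\rho(t)M}=\sum_{\alpha}n_{\alpha}\tr{\psi_{\alpha}(t)P}$, which is (\ref{eq:18}).

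I do not expect a serious obstacle here; the content is essentially a bookkeeping argument. The one point to handle carefully is the reduction to the 1-RDM together with its unitary single-particle evolution: this works precisely because $M$ is quadratic, so higher correlation functions---where the boson/fermion statistics would actually matter---never enter, and because the second-quantized Heisenberg equation transports single-particle wavefunctions rigidly. If $M$ were quartic or higher, the same strategy would fail without an additional Gaussianity (Wick) assumption on $\rho(0)$.
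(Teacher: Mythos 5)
Your proof is correct and is essentially the paper's own argument written in basis-free form: your one-body reduced density matrix $\Gamma(t)$ is exactly the paper's Hermitian correlation matrix $C_{\alpha\beta}=\tr{\rho\, d^{\dagger}_{\alpha}d_{\beta}}$, your spectral decomposition of $\Gamma(0)$ into natural orbitals is the paper's unitary diagonalization of that matrix, and your relation $\Gamma(t)=e^{-iHt}\Gamma(0)e^{iHt}$ plays precisely the role of the paper's Heisenberg-picture step $\tr{\rho(0)\,a^{\dagger}_{\alpha}a_{\beta}}=\tr{\rho(t)\,a^{\dagger}_{\alpha}(t)a_{\beta}(t)}$ with $a_{\alpha}(t)=a(e^{-iHt}\ket{\psi_{\alpha}})$. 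There is no gap; the two write-ups differ only in packaging (operator language and the spectral theorem versus matrix elements and an explicit diagonalizing unitary).
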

}

\bl{
This is proved for \textit{any} $N$ particle state in appendix \ref{app:Proof of Theorem 1}.  Here we will just prove it for the simpler case of an initial state with $N$ bosons or fermions $a^{\dagger n_1}_1...a^{\dagger n_k}_{n_k}\ket{0}$, where $a^{\dagger}_{\alpha}=a^{\dagger}(\ket{\psi_{\alpha}})$, and $\ket{\psi_{\alpha}}$ are some orthonormal single-particle states.  In this case, the states $\ket{\psi_{\alpha}}$ mentioned in the theorem are already given.

\begin{proof}
Expand
\begin{equation}
 a^{\dagger}(\ket{\phi_i})=b^{\dagger}_i=\sum_{\alpha}c_{i,\alpha}a^{\dagger}_{\alpha},
 \end{equation}
where $c_{i,\alpha}$ are complex numbers.  Then
\begin{equation}
\begin{split}
 \tr{\rho(0)\, b^{\dagger}_ib_i} & =\sum_{\alpha}|c_{i,\alpha}|^2\tr{\rho(0)\, a^{\dagger}_{\alpha}a_{\alpha}}\\
 & =\sum_{\alpha}n_{\alpha} |c_{i,\alpha}|^2,
 \end{split}
\end{equation}
where $n_{\alpha}$ is the number of particles in mode $\alpha$.  Next, we use $c_{i,\alpha}=\{a_{\alpha},b^{\dagger}_i\}=\langle \psi_{\alpha}|\phi_i\rangle$ for fermions, or $c_{i,\alpha}=[a_{\alpha},b^{\dagger}_i]=\langle \psi_{\alpha}|\phi_i\rangle$ for bosons, to get
\begin{equation}
 \tr{\rho(0)\, b^{\dagger}_ib_i} =\sum_{\alpha}n_{\alpha}\langle \psi_{\alpha}|\phi_i\rangle\langle \phi_i|\psi_{\alpha}\rangle.
\end{equation}
Therefore,
\begin{equation}
\begin{split}
 \tr{\rho(0)\, M} & =\sum_{\alpha}n_{\alpha}\langle \psi_{\alpha}|P|\psi_{\alpha}\rangle\\
 & = \sum_{\alpha}n_{\alpha}\tr{\psi_{\alpha}P},
 \end{split}
\end{equation}
where $P=\sum_i\ket{\phi_i}\bra{\phi_i}$.
To incorporate the dependence on time, we use $a_{\alpha}(t)=U(t)a_{\alpha}U^{\dagger}(t)$ and
\begin{equation}
 \tr{\rho(0)\, a^{\dagger}_{\alpha}a_{\beta}} = \tr{\rho(t)\, a^{\dagger}_{\alpha}(t)a_{\beta}(t)}.
\end{equation}
The end result is
\begin{equation}
 \tr{\rho(t) M}= \sum_{\alpha}n_{\alpha}\tr{\psi_{\alpha}(t) P}.
\end{equation}
\end{proof}
Notice that linearity of the time average, together with equation (\ref{eq:18}) implies
\begin{equation}
 \tr{\avg{\rho} M}= \sum_{\alpha}n_{\alpha}\tr{\avg{\psi_{\alpha}} P}.
\end{equation}
\subsection{Coarse-grained measurements}}
We can apply this to $\Delta_M(t)$, noting that for the applications we are interested in $\|M\|=N$ when restricted to the $N$ particle subspace.  This occurs, for example, when we are measuring the particle number in a region of space.  Put another way, we take the experimental accuracy of our measurements to be at best $\epsilon N$, where $\epsilon$ is some very small constant.  For equilibration to occur, we need $\Delta_M(t)$ to be small compared to $\epsilon$ most of the time.  We get
\begin{equation}\label{eq:4}
\begin{split}
 \frac{\abs{\tr{\rho(t) M}-\tr{\avg{\rho} M}}}{\|M\|} & = \abs{\tr{\sigma(t) P}-\tr{\avg{\sigma} P}}\\
 & = D_P(\sigma(t),\avg{\sigma}),
 \end{split}
\end{equation}
where $\sigma(t)=\f{1}{N}\sum_j\ket{\psi_j(t)}\bra{\psi_j(t)}$ is a single-particle state.  In words, the $N$ particle problem has been replaced by a single-particle problem in terms of the distinguishability given a single measurement with projectors $P$ and $\openone - P$.

Now recall that equilibration of expectation values does not necessarily imply that equilibration will be observed.  For the examples we look at, the fluctuations in the observed value of $M$, given by $(\tr{\rho(t)M^2}-\tr{\rho(t)M}^2)^{1/2}$, are bounded above by $\sqrt{N}$, which is proved in appendix \ref{app:fluctuations}.  In fact, a large class of fermion systems have time-averaged fluctuations bounded above by $\sqrt{N}$, as seen in appendix \ref{app:fluctuations}.  For large numbers of particles, \bl{(}comparable to $10^{23}$, for example\bl{)} $\sqrt{N}$ is small compared to our experimental precision $\epsilon N$\bl{,} and the fluctuations are not practically observable.  Even for dilute gases with $O(10^4)$ particles, $\sqrt{N}\sim 100$, so the fluctuations are of the order of $1\%$ of the total particle number, which is still quite small.
\newline
\bl{\subsection{Few-mode measurements}
\label{sec:Few-mode measurements}
We are not just restricted to coarse-grained measurements.  We can also discuss measurements involving a few modes.  These could be single-site densities or correlation functions in the setting of lattice models.  Or they could be phase correlators $\mathrm{tr}[\rho(t)a^{\dagger}_ia_j]$, which are typically inferred from time-of-flight measurements \cite{Friesdorf15}.

We will return to this in section \ref{sec:Local equilibration}, where we will see that for a large class of lattice systems any measurement on a small number of modes (small compared to the lattice size) will equilibrate.  And the timescale will be relatively fast.

\section{Example I:\ Particles in a Box.}}
Suppose we have a one dimensional box with a partition at the halfway point (this can be extended to a three dimensional example as shown in appendix \ref{app:Calculations for Fermions in a Box}).  On the left of the partition we have $N$ \bl{fermions or bosons} at zero temperature.  We open the partition at $t=0$, and the observable we focus on is $M$, which counts the \bl{particles} in the left half of the box.
\begin{figure}[ht!]
\centering
    \resizebox{8.0cm}{!}{\includegraphics{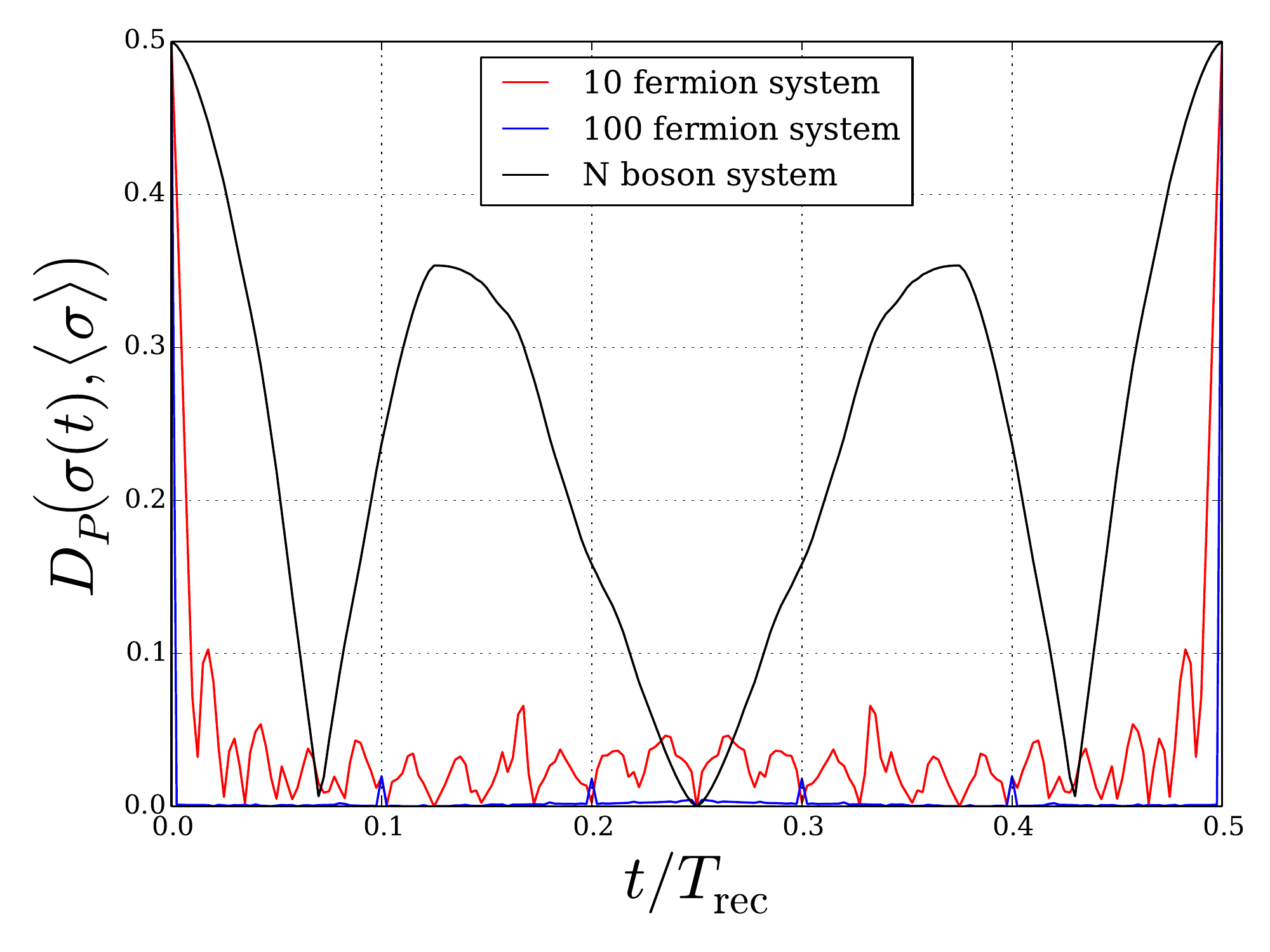}}
    \footnotesize{\caption[Fermions in a box]{\bl{Equilibration of a gas of particles in a box.  The initial state corresponds to $N$ fermions or bosons trapped on one side of a partition, which is removed at $t=0$.  The measurement we consider counts the number of particles on the left side of the box.  Above is a plot of the resulting single-particle distinguishability for the one dimensional example with $N=10$ or $100$ fermions and any number of bosons.  Time is measured in units of the recurrence time, though for the initial state here there is another recurrence at half the recurrence time.  In general, for fermions the equilibration time is $O(1/N)$.  For bosons, the system does not equilibrate, as can be seen from the figure.  These plots were generated using equation (\ref{eq:29}).}\label{fig:fermions}}}
\end{figure}

Using equation (\ref{eq:4}), we can replace this $N$ particle problem by a single-particle one, so
\begin{equation}\label{eq:8}
 \abs{\frac{\tr{\rho(t) M}-\tr{\avg{\rho} M}}{N}}  = D_{P}(\sigma(t),\avg{\sigma}),
\end{equation}
which is plotted in figure \ref{fig:fermions}.  Here $\sigma(t)$ is a state of a free particle in a box, $\avg{\sigma}$ is its time average and $P$ is the projector onto the left hand side of the box.

\bl{
\subsection{Fermions}
First, let us look at the case where the particles are fermions.  The initial state of the $N$ fermion system has all fermions in the left half of the box at temperature zero.  This means that the initial single-particle state $\sigma(0)$ is an equal mixture of the lowest $N$ energy levels of a particle trapped in the left half of a box.  This can be seen from equation (\ref{eq:18}).}

\bl{
The energy eigenstates for a particle in a box are given by
\begin{equation}
 \ket{n}=\int_0^L\!\mathrm{d}x\sqrt{\frac{2}{L}}\sin\left(\frac{\pi nx}{L}\right)\ket{x},
\end{equation}
where $n>0$ is an integer and $L$ is the length of the box.  Similarly, the energy eigenstates for a particle trapped in the left half of the box are given by
\begin{equation}\label{eq:42}
 \ket{\psi_k}=\int_0^{L/2}\!\mathrm{d}x\sqrt{\frac{4}{L}}\sin\left(\frac{2\pi kx}{L}\right)\ket{x},
\end{equation}
where again $k>0$ is an integer.

The initial state of the single-particle system is
\begin{equation}\label{eq:40}
 \sigma(0)=\frac{1}{N}\sum_{k=1}^{N}\ket{\psi_k}\bra{\psi_k},
\end{equation}
with matrix elements $\sigma_{nm}=\bra{n}\sigma(0) \ket{m}$.  Similarly, the matrix elements of the projector onto the left half of the box are $P_{mn}=\bra{m}P \ket{n}$.

Let us look at the distinguishability to see if the system equilibrates.  In figure \ref{fig:fermions}, the distinguishability as a function of time is plotted.  From the plots we can see that, as the number of fermions increases, the average distinguishability gets smaller.  Notice that particles in a box have an exact recurrence time of $T_{\mathrm{rec}}=\f{2\pi}{\nu}$ since the energy levels are $E_n=\nu n^2$, where $n$ is an integer greater than zero, and $\nu=\f{\pi^2}{2mL^2}$.  This is because all phases of density matrix elements in the energy basis $e^{-i(E_n-E_m)t}$ are $1$ at $t=\f{2\pi}{\nu}$.  As in \cite{MLS15}, this means that we need only study the system over the interval $[0,T_{\mathrm{rec}}]$.  In fact, with the particular initial state below, a recurrence actually occurs at $T_{\mathrm{rec}}/2$.

Evaluating the distinguishability at time $t$, we get
\begin{equation}\label{eq:41}
\begin{split}
 D_{P}(\sigma(t),\langle\sigma\rangle) = \abs{\sum_{n\neq m} e^{-i(n^2-m^2)\nu t}\sigma_{nm}P_{mn}}&\\
 = \abs{\sum_{n\neq m} \cos\left[(n^2-m^2)\nu t\right] \sigma_{nm}P_{mn}}&,
 \end{split}
\end{equation}
where we used the fact that $\sigma_{nm}$ and $P_{mn}$ are symmetric under swapping $n$ and $m$ because all vectors and operators here are real.

In appendix \ref{app:Calculations for Fermions in a Box}, we see that the distinguishability can be written as
\begin{equation}\label{eq:29}
D_{P}(\sigma(t),\langle\sigma\rangle)=
 \frac{2}{N}\abs{\sum_{n\ \mathrm{odd}}\sum_{k=1}^N\cos\left[(n^2-4k^2)\nu t\right]f(n,2k)^2},
\end{equation}
where
\begin{equation}
 f(n,2k)^2=\frac{4}{\pi^2}\frac{4k^2}{(n^2-4k^2)^2},
\end{equation}
for $n\neq 2k$.}

\bl{For the system to equilibrate, we need it to spend most of its time indistinguishable from its time-average state.  We see in appendix \ref{app:Calculations for Fermions in a Box}, that the time-average distinguishability satisfies $\langle D_{P}(\sigma(t),\avg{\sigma})\rangle =O(\ln(N)^2/N)$.  Therefore, most of the time the system state is indistinguishable from its time average, provided $N$ is large.

We can also say something about the timescale.  We see in appendix \ref{app:Calculations for Fermions in a Box}, that the timescale for equilibration is at most
\begin{equation}
T_{\mathrm{eq}}= \frac{1}{Na\nu}=\frac{2mL^2}{Na\pi^2}.
\end{equation}
Here $a$ is a small constant that we choose such that the distinguishability at $t=T_{\mathrm{eq}}$ is small:
\begin{equation}
 D_{P}(\sigma(T_{\mathrm{eq}}),\langle\sigma\rangle)  \leq \frac{\pi a}{3}+O\left(\frac{\log(N)^2}{N}\right),
\end{equation}
which is also derived in appendix \ref{app:Calculations for Fermions in a Box}.  Interestingly, the timescale decreases with increasing particle number.

\subsection{Bosons}
The situation for $N$ bosons is simpler.  As they are initially at zero temperature, all $N$ bosons are in the ground state.  The corresponding initial single-particle state $\sigma(0)$ is just the lowest energy state for a particle trapped on the left of the partition.  This does not depend on $N$.  By looking at the plot of the distinguishability in figure \ref{fig:fermions}, it is clear that this system does not equilibrate because the distinguishability is large for most times.

So the behaviour of $N$ bosons is very different from the fermion case.  This is because of the exclusion principle:\ in the fermion case, the fermions had to occupy different energy levels and so the corresponding single-particle state was spread out over many energy levels.  This is not the case for bosons at zero temperature.
}

\section{Example II:\ Bosons after a quench}
\bl{For our} second example, \bl{suppose we have} $N$ bosons \bl{at zero temperature in a one dimensional harmonic trap with frequency $\omega_0$.}
\bl{We will consider what happens after two different quenches.

\subsection{Quench to a square well potential}}
Suppose the Hamiltonian changes suddenly so that the bosons are then confined in a deep square well potential, which we can idealize as a box corresponding to the interval $[-\f{L}{2},\f{L}{2}]$.  Let the measurement operator $M$ count the number of bosons in the central region of the box $[-\f{L}{4},\f{L}{4}]$.  Applying equation (\ref{eq:4}), we see that
\begin{equation}
\label{eq:3}
 \abs{\frac{\tr{\rho(t) M}-\tr{\avg{\rho} M}}{N}} = D_{P}(\psi(t),\avg{\psi}),
\end{equation}
where $\psi(t)=\ket{\psi(t)}\bra{\psi(t)}$ is a pure state of a single-particle and $P$ is the projector onto the central region of the box.

The equilibration timescale has already been estimated for this single-particle system in \cite{MLS15}.  First, the infinite-time average of $D_{P}(\psi(t),\avg{\psi})$ is numerically shown to scale like
\begin{equation}
 \langle D_{P}(\psi(t),\avg{\psi})\rangle \sim \bl{\left(\frac{1}{L}\sqrt{\frac{8\pi}{m\omega_0}}\right)^{\,0.79}},
\end{equation}
where it was assumed that the width of the \bl{initial} wavefunction is small compared to the length of the box, meaning $\sigma=1/\sqrt{2m\omega_0} \ll L$.  So for sufficiently narrow potentials (or sufficiently large boxes), equilibration occurs.  Furthermore, the timescale for equilibration is shown to be \cite{MLS15}
\begin{equation}
T_{\mathrm{eq}}=\frac{L}{\pi}\sqrt{\frac{m}{2\omega_0}}.
\end{equation}

It would be interesting to observe this experimentally.  \bl{In fact, it may be feasible to create square-well potentials in practice:\ in \cite{GSGSH13} a three dimensional cylindrical potential was created to trap a Bose-Einstein condensate, so creating potentials with sharply defined walls may be possible.

\subsection{Quench to a weaker harmonic trap}
Recent experiments have followed the dynamics of Bose gases after a different quench to that of the previous section.  By quickly changing the strength of a harmonic trap, oscillatory behaviour was observed \cite{FCJB14}.  \re{Such behaviour occurred in both the strongly and weakly interacting regimes.  For our purposes, the latter of these regimes is relevant and corresponds to an ideal Bose gas in one dimension.}  In \re{\cite{FCJB14}} the ratio of initial trap frequency $\omega_0$ and post-quench frequency $\omega$ was close to one:\ $\omega_0/\omega\simeq 1.3$.  Here we see equilibration when $\omega_0/\omega$ is much larger than one.

For our observable, let us take the number of bosons in the spatial region $[-l,l]$.  Again, using equation (\ref{eq:4}), we can replace this $N$ particle problem by a single-particle one, so
\begin{equation}
 \abs{\frac{\tr{\rho(t) M}-\tr{\avg{\rho} M}}{N}}  = D_{P}(\psi(t),\avg{\psi}),
\end{equation}
where the $P$ is the projector onto the region $[-l,l]$.  The distinguishability is
\begin{equation}
D_{\psi}(\psi(t),\avg{\psi})= |\tr{P\psi(t)}-\tr{P\av{\psi}}|.
\end{equation}
So we need only see if $\tr{P\psi(t)}$ spends most of its time close to its time average.

The problem is simplified by using the propagator for a harmonic oscillator with frequency $\omega$, given by \cite{Pauli00}
\begin{equation}
\begin{split}
 & K(x,y,t)=\\
 & \sqrt{\frac{m\omega}{2\pi i \sin(\omega t)}}\exp\left[-\frac{m\omega((x^2+y^2)\cos(\omega t)-2xy)}{2i \sin(\omega t)}\right],
 \end{split}
\end{equation}
which leads to the expression
\begin{equation}
 \begin{split}
  & \tr{P\psi(t)}=\\
  & \int\!\mathrm{d}y_1\mathrm{d}y_2\int_{-l}^{l}\!\mathrm{d}x\,\psi^*(y_1)K^*(x,y_1,t)K(x,y_2,t)\psi(y_2).
 \end{split}
\end{equation}
As $\psi(x)$ is a Gaussian wavefunction, the $y_1$ and $y_2$ integrals are straightforward, leading to
\begin{equation}
 \begin{split}
  \tr{P\psi(t)}= \frac{1}{\sqrt{\pi}} \int_{-l\sqrt{\alpha(t)}}^{l\sqrt{\alpha(t)}}\!\mathrm{d}x\,e^{-x^2},
 \end{split}
\end{equation}
where
\begin{equation}\label{eq:190}
 \alpha(t)=\frac{m\omega_0}{\gamma^2\sin^2(\omega t)+\cos^2(\omega t)},
\end{equation}
with $\gamma=\omega_0/\omega$.  Next we use the approximation for the error function \cite{Winitzki08}
\begin{equation}
\begin{split}
 \mathrm{erf}(x) & =\frac{2}{\sqrt{\pi}}\int^x_0\!\mathrm{d}t\,e^{-t^2}\\
 & \simeq \mathrm{sgn}(x)\sqrt{1-\exp\left[-x^2\frac{\f{4}{\pi}+bx^2}{1+bx^2}\right]},
 \end{split}
\end{equation}
where the maximum error for any $x$ is around $0.00012$, and $b\simeq 0.147$.  The result is that
\begin{equation}\label{eq:50}
 \begin{split}
  \tr{P\psi(t)}\simeq \sqrt{1-\exp\left[-\alpha(t)l^2\frac{\f{4}{\pi}+\alpha(t)b\,l^2}{1+\alpha(t)b\,l^2}\right]}.
 \end{split}
\end{equation}
Notice that there are four independent parameters that matter:\ $l$, which controls the width of the interval the measurement looks at; $\omega$, which is the frequency of the trap after the quench; $\gamma=\omega_0/\omega$, which is the ratio of trap strengths before and after the quench; and $m\omega_0$, which determines the width of the initial state.  A natural starting point is to choose $l$ so that the initial state is almost entirely contained in $[-l,l]$, so we can fix $l^2=4/(m\omega_0)$.

As we can see from figure \ref{fig:bosons}, as $\gamma$ becomes bigger and bigger, $\tr{P\psi(t)}$ spends most of its time close to zero.  So for very large $\gamma$, equilibration occurs.  In fact, we can see directly from equations (\ref{eq:190}) and (\ref{eq:50}) that, as $\gamma$ tends to infinity, $\tr{P\psi(t)}$ tends to zero.  This holds for all times, except when $\omega t=n\pi$, with $n\in\mathbb{Z}$.


In \cite{FCJB14}, oscillatory behaviour was seen at $\gamma=1.3$.  Here, this value of $\gamma$ does not lead to any significant departure from the initial state, as seen in figure \ref{fig:bosons}.  The reason for this difference is that in \cite{FCJB14} the initial states were at non-zero temperature.  Here, we are initially at zero temperature, and we see oscillations at higher values of $\gamma$.
\begin{figure}[ht!]
\centering
    \resizebox{8.0cm}{!}{\includegraphics{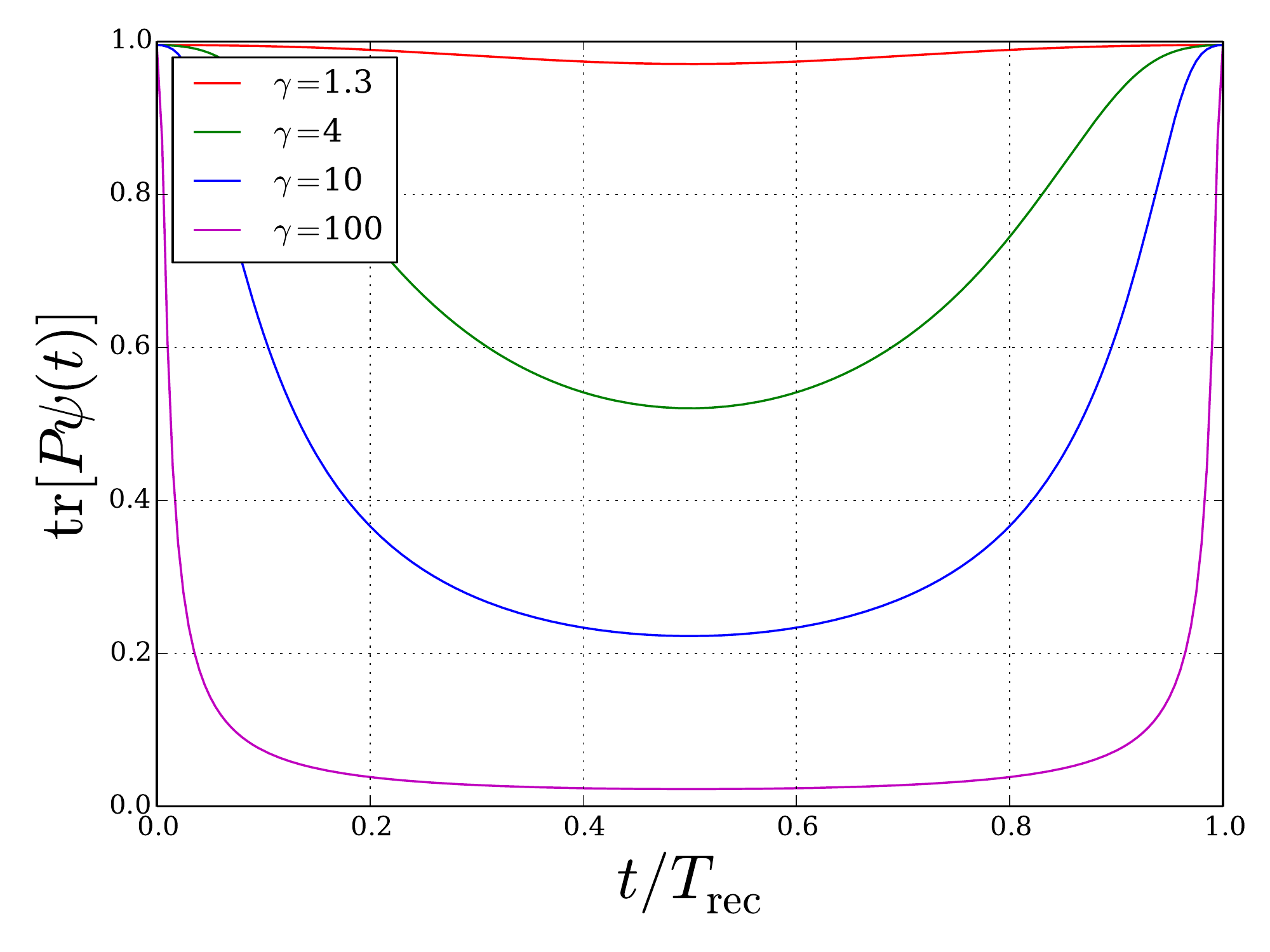}}
    \footnotesize{\caption[Bosons in a trap.]{\bl{Equilibration of bosons in a harmonic trap.  Initially we have $N$ bosons in the ground state of a harmonic trap with frequency $\omega_0$.  The trap strength is then quenched to $\omega$.  The measurement we consider counts the number of bosons in a window of width $4/\sqrt{m\omega_0}$.  Here we have plots of the corresponding single-particle quantity $\tr{P\psi(t)}$ for different values of $\gamma=\omega_0/\omega$.  The value $\gamma=1.3$ corresponds to that from \cite{FCJB14}.  We see oscillatory behaviour for $\gamma$ approximately between $4$ and $10$.  We see that, as $\gamma$ becomes larger, $\tr{P\psi(t)}$ is small most of the time, and the system equilibrates.  These plots were generated using equation (\ref{eq:50}).}\label{fig:bosons}}}
\end{figure}

To estimate the equilibration time when equilibration does occur, we estimate how long it takes for $\tr{P\psi(t)}$ to reach $p\ll 1$.  Using equation (\ref{eq:50}) and $\log(1-p^2)\simeq -p^2$, we get
\begin{equation}
 \alpha(t)l^2\frac{\f{4}{\pi}+\alpha(t)b\,l^2}{1+\alpha(t)b\,l^2}\simeq p^2.
\end{equation}
Since $p$ is small, this requires $\alpha(t)l^2$ to be small.  Using the earlier choice $l^2=4/(m\omega_0)$, we get
\begin{equation}
 \alpha(t)l^2\frac{4}{\pi}=\frac{16/\pi}{\gamma^2\sin^2(\omega t)+\cos^2(\omega t)}\simeq p^2.
\end{equation}
For large $\gamma$, this is satisfied at
\begin{equation}
 T_{\mathrm{eq}}\simeq \frac{4}{\sqrt{\pi}\gamma\omega p}=\frac{4}{\sqrt{\pi}\omega_0 p},
\end{equation}
where we assumed that $t$ was small compared to $1/\omega$, and used $\sin(x)\simeq x$ for small $x$.
}

\section{Equilibration in General}
\bl{The examples we looked at were encouraging, but a pressing question is whether one can say anything more general.  The answer is actually yes:\ here we will see general estimates for the equilibration timescale of gases with negligible interactions.}  The starting point is \bl{again to} replace the $N$ particle problem by a single-particle problem.  Then we can \bl{use a single-particle equilibration result, which builds on} previous results \cite{SF12,RK12,MGLFS14}.

\bl{First, let us state the single-particle equilibration result.  We will need to take account of degenerate energy gaps.  These occur when two different energy gaps are equal:\ $E_i-E_j=E_k-E_l$, when $E_i\neq E_k$ and $E_i\neq E_j$.  The degeneracy of the most degenerate gap is denoted by $D_G$.  If all gaps are different, $D_G=1$.}  For a particle in a box there are some degenerate energy gaps, though the addition of an inhomogeneous potential $V(x)$ would \bl{generally} change this.  \bl{The harmonic oscillator also has many degenerate energy gaps.}  Typically, \bl{however, these are very special cases, and} we expect most Hamiltonians would have few degenerate energy gaps.

\bl{
\begin{theorem}\label{th:54}
Suppose we have a single-particle system with a $d$ dimensional state space.  Let $A$ be an operator with operator norm $\|A\|$, and let $\sigma(t)$ be a state unitarily evolving via a Hamiltonian $H$.  Denote the infinite-time average of $\sigma(t)$ by $\av{\sigma}$.  Assuming that we can make the density of states approximation, meaning we replace $\sum_E$ by $\int \!\mathrm{d}E\, n(E)$, where $n(E)$ is the density of states, we get
\begin{equation}\label{eq:5}
\begin{split}
\frac{\av{ |\tr{\sigma(t)A} - \tr{ \av{\sigma} A }|^2 }_T}{\norm{A}^2} \leq\frac{c_1}{\de}\left[D_G + c_2\frac{n_{\max}d}{T}\right],
\end{split}
\end{equation}
where $\av{\cdot}_T$ denotes the time average over $[0,T]$, and we have constants $c_1=e\sqrt{\pi}/2$ and $c_2=4\sqrt{\pi}$.  Also, $n_{\max}=\max_En(E)$, and the effective dimension of the state $\sigma(t)$ is defined by
\begin{equation}
 \frac{1}{d_{\mathrm{eff}}}=\sum_{E}\left(\tr{\sigma(0) P_{E}}\right)^2,
\end{equation}
where $P_E$ is the projector onto the energy eigenspace corresponding to energy $E$.
\end{theorem}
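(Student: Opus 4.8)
The plan is to pass to the eigenbasis of $H$ and control the time-averaged fluctuation gap by gap. Writing $H=\sum_E E\,P_E$ and $\sigma(t)=e^{-iHt}\sigma(0)e^{iHt}$, the deviation of the expectation value from its mean is $\delta(t)=\tr{\sigma(t)A}-\tr{\av{\sigma}A}=\sum_{E\neq E'}e^{-i(E-E')t}W_{EE'}$, where $W_{EE'}=\tr{P_E\,\sigma(0)\,P_{E'}A}$ and the energy-diagonal blocks have been removed by subtracting $\av{\sigma}=\sum_E P_E\sigma(0)P_E$. The first step is to trade the flat average over $[0,T]$ for a Gaussian one using the pointwise majorant $1\leq e\,e^{-t^2/T^2}$ on $[0,T]$, giving $\av{\delta^2}_T\leq \f{e}{T}\int_0^{\infty}\!\mathrm{d}t\,e^{-t^2/T^2}\delta(t)^2$. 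Extending the integral to the whole line and evaluating the Gaussian transform converts each cross term into a factor $e^{-((E-E')-(F-F'))^2T^2/4}$, which decays in the \emph{difference} of two gaps, and produces the overall prefactor $c_1$.

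The second step is to split the resulting double sum into the terms where the two gaps coincide and the rest. For the coinciding-gap part I would use that at most $D_G$ pairs of energies share any gap value, so a single Cauchy--Schwarz step produces the factor $D_G$ multiplying the common weight $\sum_{E\neq E'}\abs{W_{EE'}}^2$. The remaining off-diagonal terms still carry the Gaussian factor; here I would invoke the density-of-states approximation $\sum_E\to\int\!\mathrm{d}E\,n(E)$ to replace the sum over the second gap by an integral against the density of gaps $n_G(h)=\int\!\mathrm{d}E\,n(E)n(E-h)\leq n_{\max}\int\!\mathrm{d}E\,n(E)=n_{\max}d$, and then bound $\int\!\mathrm{d}h\,e^{-h^2T^2/4}\sim\sqrt{\pi}/T$. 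This is exactly what turns the off-diagonal contribution into the $c_2\,n_{\max}d/T$ term, with $c_2$ a numerical multiple of $\sqrt{\pi}$.

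What remains is to show that the common factor $\sum_{E\neq E'}\abs{W_{EE'}}^2$ is at most $\norm{A}^2/\de$, i.e.\ that it is governed by the effective dimension and not by $\tr{\sigma(0)^2}$, which for a pure state would be useless. The key inequalities are the positivity bound $\abs{\bra{E}\sigma(0)\ket{E'}}^2\leq\bra{E}\sigma(0)\ket{E}\bra{E'}\sigma(0)\ket{E'}$, which follows from $\sigma(0)\geq 0$, together with the Hilbert--Schmidt estimate $\tr{(A\av{\sigma})^2}\leq\norm{A}^2\tr{\av{\sigma}^2}=\norm{A}^2/\de$; chaining these collapses the surviving weight onto the dephased state and yields the claimed $1/\de$ scaling. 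Collecting the three pieces gives the stated bound. I expect the main obstacle to be the off-diagonal step: the passage from the discrete sum over gap differences to the Gaussian integral is precisely where the density-of-states approximation enters, and keeping this estimate uniform in the base gap $E-E'$ (so that it factors out cleanly and leaves only $n_{\max}d/T$, with no stray $D_G$) is the delicate point. Pinning down the sharp constants $c_1$ and $c_2$ will require careful bookkeeping of the Gaussian majorant and of the convolution against $n_G$, respectively.
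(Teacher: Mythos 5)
Your proposal follows the same architecture as the paper's proof: a Gaussian weight majorizing the flat time average (the paper uses $e\,e^{-4(t-T/2)^2/T^2}\geq 1$ on $[0,T]$, centered at $T/2$; your choice $e\,e^{-t^2/T^2}$ also works but doubles the prefactor of the $D_G$ term, giving $e\sqrt{\pi}$ rather than $c_1=e\sqrt{\pi}/2$, while the $1/T$ term comes out identical since the faster decay $e^{-h^2T^2/4}$ compensates), then a split of the double sum over gaps into coinciding-gap terms (your Cauchy--Schwarz giving $D_G\sum_{\alpha}\abs{W_\alpha}^2$ is exactly the paper's $\max_\beta\sum_\alpha\abs{M_{\alpha\beta}}$ bound restricted to equal gaps) and non-coinciding terms handled by the density-of-states approximation; your worry about uniformity in the base gap is unfounded, since the Gaussian integral over the whole line is translation invariant, just as in the paper.

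The genuine gap is in your final step, and it is specific to degenerate energy levels, which the theorem explicitly allows: the $P_E$ in the definition of $\de$ may have rank greater than one, and the paper later invokes the theorem with level degeneracy $n_d>1$. Your chain rests on $\abs{\bra{E}\sigma(0)\ket{E'}}^2\leq\bra{E}\sigma(0)\ket{E}\bra{E'}\sigma(0)\ket{E'}$ and on $\tr{\av{\sigma}^2}=1/\de$, both of which are rank-one statements. With degeneracies one only has $\tr{\av{\sigma}^2}\leq 1/\de$, and, worse, the intermediate inequality your chain would produce, $\sum_{E\neq E'}\abs{W_{EE'}}^2\leq\norm{A}^2\tr{\av{\sigma}^2}$, is false. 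Concretely, take two doubly degenerate levels with eigenspaces spanned by $\{\ket{1},\ket{2}\}$ and $\{\ket{3},\ket{4}\}$, the state $\sigma(0)=\frac{1}{2}\proj{\psi_1}+\frac{1}{2}\proj{\psi_2}$ with $\ket{\psi_j}=\frac{1}{\sqrt{2}}(\ket{j}+\ket{j+2})$, and $A=\ket{3}\bra{1}+\ket{1}\bra{3}+\ket{4}\bra{2}+\ket{2}\bra{4}$, so $\norm{A}=1$. Then $W_{12}=W_{21}=\frac{1}{2}$, hence $\sum_{E\neq E'}\abs{W_{EE'}}^2=\frac{1}{2}$, while $\norm{A}^2\tr{\av{\sigma}^2}=\frac{1}{4}$; the theorem itself survives because $1/\de=\frac{1}{2}$ here (the bound is tight, not violated). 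What your element-wise positivity bound misses is the coherent addition inside the block trace $W_{EE'}=\tr{P_E\sigma(0) P_{E'}A}$. The paper avoids exactly this by purifying $\sigma(0)$ and choosing, in each eigenspace of the purified Hamiltonian, a basis adapted to the state so that only one basis vector per eigenspace carries weight; the problem then becomes effectively non-degenerate with $\abs{c_E}^2=\tr{\sigma(0)P_E}$, and the Cauchy--Schwarz chain closes with the $1/\de$ of the theorem. To complete your argument you must either import that purification-plus-basis-choice step, or prove the block inequality $\sum_{E\neq E'}\abs{\tr{P_E\sigma(0) P_{E'}A}}^2\leq\norm{A}^2\sum_E\left(\tr{\sigma(0) P_E}\right)^2$ directly.
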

}

Equilibration of the expectation value of $A$ occurs \bl{provided} the right hand side of equation (\ref{eq:5}) is sufficiently small.  \bl{As $T\rightarrow \infty$, equilibration is guaranteed if $c_1D_G/\de\ll 1$.  The effective dimension $\de$ measures how spread out over energy levels the initial state is.  If $\de$ is very large we expect equilibration to occur.

But we can also estimate the timescale:\ the equilibration timescale can be bounded above by the smallest $T$ such that the right hand side of equation (\ref{eq:5}) is small.  In other words, when equilibration occurs, we get an upper bound for the timescale:
\begin{equation}
 T_{\mathrm{eq}}\propto \frac{n_{\max}d}{\de}.
\end{equation}
The main task now is to use this single-particle equilibration result to find timescales for $N$ particle systems.
}

\bl{
\subsection{Coarse-grained measurements}
Let us start with coarse-grained measurements.  We will see that equilibration of coarse-grained observables generally occurs much quicker than what we would expect based on previous timescale bounds from \cite{SF12,MGLFS14}.

By} mapping an $N$ particle problem to the single-particle picture via equation (\ref{eq:4}), we want to bound
\begin{equation}
\begin{split}
 \avg{\abs{\frac{\tr{\rho(t) M}-\tr{\avg{\rho} M}}{N}}}_{T}  = \avg{D_{P}(\sigma(t),\avg{\sigma})}_T&\\
 =\av{ |\tr{\sigma(t)P} - \tr{ \avg{\sigma} P }| }_T&\\
 \leq \sqrt{\av{ \left(\tr{\sigma(t)P} - \tr{ \avg{\sigma} P }\right)^2 }_T}&\\
  \leq\sqrt{\frac{c_1}{\de}\left[D_G + c_2\frac{n_{\max}d}{T}\right]}&,
 \end{split}
\end{equation}
where the \bl{third line} follows from concavity of the square root.  \bl{The last line follows from the result for single-particle equilibration from the last section, namely equation (\ref{eq:5}).}

To \bl{see} whether equilibration occurs at all, let $T\to \infty$ to get the infinite-time average.  \bl{And so we must estimate $\f{1}{\de}$.  Defining $N_E$ to be the operator that counts the number of particles in energy level $E$, we get
\begin{equation}
\begin{split}
 \frac{1}{d_{\mathrm{eff}}} & =\sum_{E}(\tr{\sigma P_E})^2 \\
 & = \frac{1}{N^2}\sum_{E}(\tr{\rho N_E})^2\\
 & = \sum_{E}\left(\frac{n_E}{N}\right)^2,
 \end{split}
\end{equation}
where $n_E=\tr{\rho N_E}$.  Getting the second line used equation (\ref{eq:18}) from theorem \ref{th:23}.  So we see that $\f{1}{\de}$ is extremely small if the state is spread over many energy levels.}

\bl{Consider} $N$ fermions or \bl{the special case of} $N$ bosons in \bl{orthogonal} modes\bl{.  Then} the resulting single-particle density operator $\sigma(t)$ is an equal mixture of $N$ orthogonal states.  \bl{In that case, $1/d_{\mathrm{eff}}\leq\f{n_d}{N^2}\sum_En_E\leq n_d/N$,  where $n_d$ is the degeneracy of the most degenerate energy level.}  As a result,
\begin{equation}
 \bl{\avg{\abs{\frac{\tr{\rho(t) M}-\tr{\avg{\rho} M}}{N}}}
\leq  \sqrt{\frac{c_1 n_d D_G}{N}}}.
\end{equation}

So the bottom line is that for the coarse measurements considered here, equilibration occurs very generally \bl{despite the fact that these are} non-interacting bosons or fermions. 

We can also say something substantial \bl{about the equilibration timescale}.

\bl{We can always restrict our attention to} $d$ energy levels of the \bl{corresponding} single-particle system, which may require an energy cutoff.  And suppose $d$ is bounded above by a polynomial in $N$.  This depends on the state $\sigma(t)$ and so ultimately on the state of each of the $N$ bosons or fermions.  For example, for the calculations with fermions equilibrating in appendix \ref{app:Calculations for Fermions in a Box}, we effectively took a cutoff with $d=O(N)$.  \bl{In fact, for the bosonic examples, $d$ was independent of $N$.}  For lattice systems this is particularly natural if there is a constant density of particles, then $\bl{d\propto V\propto N}$, where \bl{$V$} is the number of lattice sites.

Next, \bl{we estimate $n_{\max}$, which is often polynomial in $d$, and hence $N$.  F}or example, \bl{$n_{\max}\sim d^3$} for a system whose energy levels go like $E_n\propto 1/n^2$, similar to the energies for bound states in a Coulomb potential.  Notice that this is a system we would expect to have very many small gaps.  Conversely, when the energy level spacings grow with $d$ we would expect better behaviour.  For example, when $E_n\propto n^2$, one gets \bl{$n_{\max}\sim 1$}.

\bl{Putting this all together, if equilibration occurs, the timescale is typically}
\begin{equation}
T_{\mathrm{eq}}\in O(N^k)
\end{equation}
for some positive integer $k$.  This is far better than the bounds of \cite{SF12,MGLFS14,RK12}, which were exponential in $N$ for physical systems.  Of course, how \bl{$n_{\max}$} scales with $d$ and how $d$ scales with $N$ depend on the system in question, but neither of the requirements above appear unnaturally restrictive.

\re{It is also interesting that each of \cite{Cramer12,BCHHKM12,MRA13} found equilibration timescales that were polynomial (or faster) in the number of particles.  In contrast to the setting considered here, these results involved averaging over Hamiltonians with respect to the global unitary Haar measure.  Because of this, it is not clear how to interpret the implications of \cite{Cramer12,BCHHKM12,MRA13} for equilibration of \textit{local} Hamiltonian systems.  Nevertheless, \cite{Cramer12,BCHHKM12,MRA13} do say something about equilibration timescales of fully interacting models, which is very interesting.}

\bl{\subsection{Local equilibration}\label{sec:Local equilibration}}
\bl{We can also look at equilibration of non-interacting lattice models.  This would include the free superfluid regime of the Bose-Hubbard model, for example.  We consider local few-mode measurements, where few means that the number of modes is small compared to the number of lattice sites.  This setting includes all measurements in some small region of the lattice or correlation functions over long distances.  It also includes phase correlators, which are important in ultracold atom systems.

We will state the single-mode result first.  This relies on the Hamiltonian being some form of local (not necessarily nearest-neighbour) hopping Hamiltonian:\ the tight-binding model is one example.

To make the formulas easier to read, we will assume that the maximum energy level degeneracy $n_d$ and the number of modes per site are both one.  In the proofs of these results in appendix \ref{sec:Free lattice models} we allow other values of these quantities.
\begin{corollary}\label{th:76}
Take a free lattice model, and assume we can make the density of states approximation, as in theorem \ref{th:54}. Let $M=a^{\dagger}(\ket{\phi})a(\ket{\phi})$, where $\ket{\phi}$ is a single-particle state localized on at most $l$ sites (which need not be near each other).  Then we have
\begin{equation}\label{eq:678}
\begin{split}
\av{\abs{\tr{\rho(t) M}-\tr{\avg{\rho} M}}}_T\leq l\sqrt{c_1\left[\frac{D_G}{d} + c_2\frac{n_{\max}}{T}\right]},
 \end{split}
\end{equation}
where $d$ is the dimension of the corresponding single-particle Hilbert space, and we have constants $c_2=4\sqrt{\pi}$ and $c_1=(e\sqrt{\pi}/2)$.  Also, $n_{\max}=\max_En(E)$, where $n(E)$ is the density of states.

For bosons, we needed to assume that the initial state has at most one boson in each mode.  Otherwise, the same result holds, but with an extra factor on the right hand side given by the maximum number of bosons in a given mode.
\end{corollary}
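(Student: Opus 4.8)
The plan is to reduce the many-body measurement to a single-particle equilibration problem for the one-body density matrix and then invoke Theorem~\ref{th:54}. First I would apply Theorem~\ref{th:23}: since $M=a^{\dagger}(\ket{\phi})a(\ket{\phi})$ is the number operator of the single mode $\ket{\phi}$, the projector in equation~(\ref{eq:18}) is the rank-one $P=\proj{\phi}$, so that
\begin{equation}
 \tr{\rho(t)M}=\tr{\sigma(t)P}=\bra{\phi}\sigma(t)\ket{\phi},
\end{equation}
where $\sigma(t)=\sum_{\alpha}n_{\alpha}\psi_{\alpha}(t)$ is the (trace-$N$) one-body density matrix, evolving under the single-particle Hamiltonian $H$. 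The task is then to bound $\av{\abs{\tr{\sigma(t)P}-\tr{\avg{\sigma}P}}}_T$.

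The key trick is to move the time dependence off $\sigma$ and onto $\ket{\phi}$. Writing $\tr{\sigma(t)P}=\tr{\sigma(0)\,U^{\dagger}(t)PU(t)}=\bra{\chi(t)}\sigma(0)\ket{\chi(t)}$ with $\ket{\chi(t)}=e^{iHt}\ket{\phi}$, the quantity becomes $\tr{\chi(t)\,\sigma(0)}$ for the \emph{fixed} observable $A=\sigma(0)$ and the pure state $\chi(t)$ evolving under $-H$, which has the same gaps, the same $D_G$ and the same density of states as $H$. Theorem~\ref{th:54} now applies verbatim, giving
\begin{equation}
 \av{\abs{\tr{\sigma(t)P}-\tr{\avg{\sigma}P}}^2}_T\leq \norm{\sigma(0)}^2\,\frac{c_1}{\de}\Big[D_G+c_2\frac{n_{\max}d}{T}\Big],
\end{equation}
where $\de$ is the effective dimension of $\chi$.

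Two ingredients then close the argument. For the operator norm, the eigenvalues of the one-body density matrix are the natural occupations, which Pauli exclusion bounds by one for fermions (and likewise for bosons with at most one particle per mode), so $\norm{\sigma(0)}\leq 1$; for general bosons $\norm{\sigma(0)}$ is the largest mode occupation, which is precisely the advertised extra factor. For the effective dimension, $1/\de=\sum_E(\tr{\phi P_E})^2$, and here I would use that the energy eigenstates of a free lattice (hopping) Hamiltonian are delocalised, $\abs{\langle x|E\rangle}^2=1/d$ when $n_d=1$. Since $\ket{\phi}$ is supported on at most $l$ sites, Cauchy--Schwarz over the support gives $\tr{\phi P_E}=\abs{\langle\phi|E\rangle}^2\leq l/d$, and combining with $\sum_E\tr{\phi P_E}=1$ yields $1/\de\leq l/d$. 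Substituting and passing from $\av{\abs{\cdot}^2}_T$ to $\av{\abs{\cdot}}_T$ by concavity of the square root reproduces equation~(\ref{eq:678}); this route in fact gives the sharper factor $\sqrt{l}$, while the cruder estimates $\abs{P_{ab}}\leq l/d$ together with $\tr{\sigma^2}\leq d$ return exactly the stated $l$.

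The hard part is the delocalisation estimate $\abs{\langle x|E\rangle}^2\lesssim 1/d$: this is what removes all $N$-dependence and replaces it by the single-particle dimension $d$, but it genuinely relies on the free-lattice structure (extended, momentum-like eigenstates) and would fail for a generic hopping Hamiltonian with localised eigenstates. The remaining technical work, deferred to appendix~\ref{sec:Free lattice models}, is handling general degeneracy $n_d>1$ and more than one mode per site, and carrying through the density-of-states bookkeeping inherited from Theorem~\ref{th:54} to fix the constants $c_1$ and $c_2$.
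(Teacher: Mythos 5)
Your proposal is correct and takes essentially the same route as the paper's appendix proof: both switch to the Heisenberg picture so that the localized mode $\ket{\phi}$ becomes the evolving pure state while the occupation operator $\Pi=\sum_\alpha n_\alpha\psi_\alpha$ (your $A=\sigma(0)$, with $\norm{A}\leq 1$ by Pauli exclusion or the one-boson-per-mode assumption) becomes the fixed observable, then bound $1/\de$ via delocalization of free-lattice eigenstates and apply theorem \ref{th:54} with concavity of the square root. The only difference is your tighter effective-dimension estimate $1/\de\leq\max_E\tr{\phi P_E}\cdot\sum_E\tr{\phi P_E}\leq l/d$, which yields the improved factor $\sqrt{l}$; the paper instead squares the per-level bound $\tr{\phi P_E}\leq l/V$ and sums over all $d$ levels, giving $1/\de\leq l^2/d$ and hence exactly the stated factor $l$ (and its version with $n_d,s>1$ handled by the constant $c_3=c_1n_d^2s^2$).
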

This is proved in appendix \ref{sec:Free lattice models}.  Again, we see equilibration provided the right hand side of equation (\ref{eq:678}) is small.  We will estimate the equilibration timescale below corollary \ref{cor:143}.  First, let us discuss some consequences of this result.

A simple consequence is that phase correlators equilibrate.  Phase correlators are expectation values like $\mathrm{tr}[\rho(t)a^{\dagger}_{\vec{x}}a_{\vec{y}}]$, where $\vec{x}$ and $\vec{y}$ denote lattice sites.  (There may be several modes at each lattice site, but for simplicity of notation, we have assumed that there is just one.)  Using
\begin{equation}
 a^{\dagger}_{\vec{x}}a_{\vec{y}}=\frac{1}{2}\left(d^{\dagger}_1d_1-d^{\dagger}_2d_2-id^{\dagger}_3d_3+id^{\dagger}_4d_4\right),
\end{equation}
where
\begin{equation}
 \begin{split}
  d_1 & =\f{1}{\sqrt{2}}(a_{\vec{x}}+a_{\vec{y}})\\
  d_2 & =\f{1}{\sqrt{2}}(a_{\vec{x}}-a_{\vec{y}})\\
  d_3 & =\f{1}{\sqrt{2}}(a_{\vec{x}}+ia_{\vec{y}})\\
  d_4 & =\f{1}{\sqrt{2}}(a_{\vec{x}}-ia_{\vec{y}}),
 \end{split}
\end{equation}
we can express $\mathrm{tr}[\rho(t)a^{\dagger}_{\vec{x}}a_{\vec{y}}]$ in terms of single-mode densities.  And so via the triangle inequality, we can upper bound the time average of $|\mathrm{tr}[\rho(t)a^{\dagger}_{\vec{x}}a_{\vec{y}}]-\mathrm{tr}[\rho(t)a^{\dagger}_{\vec{x}}a_{\vec{y}}]|$ using corollary \ref{th:76}.

Interestingly, these results apply to a vast range of initial states $\rho(0)$.  This means that one could perform a huge variety of quenches to a free lattice system, and the equilibration results here and timescale bounds (which we will discuss below) apply.

Before discussing timescales, we can build on corollary \ref{th:76} further, getting the corollary below, which is proved in appendix \ref{sec:From single-mode to multi-mode measurements}.  We only prove the fermionic result, as the bosonic result is essentially the same.

\begin{corollary}\label{cor:143}
Take a free lattice model, and let $M$ be an operator on $l$ sites.  Suppose the initial state $\rho(0)$ is Gaussian and satisfies $[\rho(0),N]=0$, where $N$ is the total number operator.  (This is still quite general, though it rules out BCS states, for example.)  Then we get
\begin{equation}
\av{\abs{\tr{\rho(t)M}-\tr{\av{\rho}M}}}_T\leq 2^{l+2}ml^2\sqrt{c_1\left[\frac{D_G}{d}+ c_2\frac{n_{\max}}{T}\right]},
\end{equation}
where $d$ is the dimension of the corresponding single-particle Hilbert space, and we have constants $c_2=4\sqrt{\pi}$ and $c_1=(e\sqrt{\pi}/2)$.  Also, $n_{\max}=\max_En(E)$, where $n(E)$ is the density of states.  Finally, $m$ is the maximum coefficient of $M$ when $M$ is expanded in an operator basis of Majorana fermion operators.
\end{corollary}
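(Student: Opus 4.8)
The plan is to reduce the bound for a general $l$-site operator $M$ to the single-mode estimate of Corollary \ref{th:76} by exploiting Gaussianity. First I would expand $M$ in the basis of Majorana monomials, $M=\sum_S c_S\,\gamma_S$, where $S$ ranges over subsets of the $2l$ Majorana operators living on the $l$ sites, $\gamma_S$ is the ordered product of the Majoranas in $S$, and $\abs{c_S}\le m$ by the definition of $m$. Since the evolution is generated by a free (quadratic, number-conserving) Hamiltonian, the state $\rho(t)$ stays Gaussian and keeps $[\rho(t),N]=0$ for all $t$, so Wick's theorem applies at every time. Hence each $\tr{\rho(t)\gamma_S}$ is a Pfaffian of the matrix of two-point functions, and number conservation restricts the surviving contractions to the one-particle correlation matrix $C_{pq}(t)=\tr{\rho(t)a^{\dagger}_pa_q}$, with $p,q$ running over the $l$ sites (odd monomials simply vanish).

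The second step is to control the fluctuation of the entries $C_{pq}(t)$. Each is a fermionic bilinear supported on at most two sites, so the polarization identity used above for the phase correlators writes it as a combination of four single-mode number operators $a^{\dagger}(\ket{\phi})a(\ket{\phi})$, each localized on at most two sites. Applying Corollary \ref{th:76} to each of these and using the triangle inequality bounds $\av{\abs{C_{pq}(t)-\av{C_{pq}}}}_T$ by a fixed multiple of $\sqrt{c_1[D_G/d+c_2n_{\max}/T]}$, the very quantity appearing on the right-hand side of the claimed inequality. Writing $\av{C_{pq}}$ for the infinite-time average of each entry, I would then lift this from the correlators to the full monomials. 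Using the telescoping inequality $\abs{\prod_k x_k-\prod_k y_k}\le\sum_k\abs{x_k-y_k}$ together with the fact that the one-particle correlators are bounded by one in modulus, I bound $\abs{\tr{\rho(t)\gamma_S}-\av{\tr{\rho(t)\gamma_S}}}$ by a sum of two-point deviations, one per contracted pair. Summing over monomials weighted by $\abs{c_S}\le m$ and taking $\av{\cdot}_T$ then produces the stated prefactor $2^{l+2}ml^2$.

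The main obstacle is the final nonlinearity and combinatorial bookkeeping. Because the Pfaffian (determinant) is a \emph{nonlinear} function of the correlators, the infinite-time average of $\tr{\rho(t)\gamma_S}$ is not the Pfaffian of the averaged correlation matrix, so one cannot simply differentiate a single linear term; the telescoping argument is precisely what circumvents this, at the cost of bounding the deviation of $\tr{\av{\rho}\gamma_S}$ from the fixed reference $\mathrm{Pf}(\av{C_S})$, which is itself an infinite-time average of the same Pfaffian fluctuation and so is handled identically. The delicate part is ensuring that the resulting count of contractions is controlled by $2^{l+2}l^2$ rather than by the factorial number of perfect matchings; this is cleanest via an induction on the number of sites $l$, removing one mode at a time so that each step costs only a bounded factor, with the $l^2$ and the constant absorbing the two-point contribution from Corollary \ref{th:76}.

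As noted before the statement, the bosonic case is essentially the same: one replaces the Pfaffian of the one-particle correlation matrix by the permanent-type Wick expansion for number-conserving bosonic Gaussian states, and the extra occupation factor flagged in Corollary \ref{th:76} propagates through unchanged, so I would only write out the fermionic argument in full.
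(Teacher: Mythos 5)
Your proposal follows the paper's strategy for most of its length: expand $M$ in Majorana monomials with coefficients bounded by $m$, invoke Gaussianity and Wick's theorem so that each monomial expectation is a Pfaffian of a two-point matrix, use a product-difference inequality together with $\abs{\text{correlator}}\leq 1$, reduce the surviving bilinears to single-mode number operators by polarization (with $[\rho,N]=0$ killing the anomalous terms $\tr{\rho\, d_id_j}$), and feed these into Corollary \ref{th:76}. The gap is exactly at the step you yourself flag as delicate, and your proposed fix does not close it. In a generic (site) basis the Pfaffian of a $2K\times 2K$ correlation matrix is a sum over $(2K-1)!!$ perfect matchings, so the scalar inequality $\abs{\prod_k x_k-\prod_k y_k}\leq\sum_k\abs{x_k-y_k}$ simply does not apply to it. Your suggested induction, removing one mode at a time at ``bounded factor'' cost, fails as sketched: expanding the Pfaffian along a row/column pair produces $2K-1$ terms at each step, and the sub-Pfaffians of the interpolating matrices (which mix rows of $C(t)$ and $\av{C}$) are no longer expectations of Majorana monomials in a single quantum state, hence are not bounded by one. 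The recursion therefore inflates factorially, or at least you give no argument preventing this; closing it would require a genuine Pfaffian perturbation lemma (e.g.\ via multilinearity in row/column pairs plus operator-norm bounds), which you neither state nor prove.

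The paper avoids this combinatorial problem entirely with one move that your proposal is missing: it chooses the Majorana basis $c_1,\dots,c_{2K}$ for the $K$ modes so that the covariance matrix $\Gamma_{ij}=\frac{i}{2}\tr{\rho[c_i,c_j]}$ is in the canonical block-diagonal form $\Gamma=\bigoplus_{n}\left(\begin{smallmatrix}0&\lambda_n\\-\lambda_n&0\end{smallmatrix}\right)$ of a real antisymmetric matrix \cite{Greplova13}. In that basis every Wick contraction collapses: $\tr{\rho\, c_1^{r_1}\cdots c_{2K}^{r_{2K}}}$ vanishes unless the occupied indices come in consecutive pairs, and then it equals $i^w\prod_n\lambda_n^{r_{2n}}$, a \emph{single} product of numbers with $\abs{\lambda_n}\leq 1$. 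Only then does your telescoping inequality apply verbatim; each $\lambda_n$ is a bilinear handled exactly as you handle $C_{pq}$, and the factor $2^{K}$ in the final bound comes merely from counting the contributing monomials ($m'\leq m2^K$, together with $K\leq ls$), not from any sum over matchings. Without this basis choice, or a substitute Pfaffian perturbation bound, your argument does not yield the claimed prefactor and in fact does not terminate at any finite bound.
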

Typically $m$ will be order one, which is the case for correlation functions, for example.  Therefore, as long as the number of lattice sites that the measurement acts on is quite small, equilibration will also occur for free lattice systems.

Furthermore, we can use these results to upper bound the equilibration timescale.  From corollary \ref{th:76} and corollary \ref{cor:143}, the upper bound for the equilibration timescale scales like $T_{\mathrm{eq}} \propto n_{\max}$.  So it remains to estimate $n_{\max}$.  In appendix \ref{eq:Density of states for lattice models}, we show that for these lattice models, we can effectively take $n_{\max}\propto V$, where $V$ is the number of lattice sites.  Therefore, we get
\begin{equation}
 T_{\mathrm{eq}} \propto V.
\end{equation}
In particular, for one dimensional systems, we get $T_{\mathrm{eq}} \propto L$, where $L$ is the length of the system.

The scaling with system size is quite significant.  Previous bounds \cite{SF12,MGLFS14} were exponential in the system size, whereas here we get something linear.  Furthermore, in the one-dimensional case, the scaling is optimal.  This can be seen from Lieb-Robinson bounds \cite{LR72}, which imply that the time it takes for information to propagate appreciably from one region to another increases linearly with the distance between the regions.  So in one dimension $T_{\mathrm{eq}} \propto L$ is the best we can hope for.  The only possibility for better scaling is if one restricts the set of initial states under consideration.  A good example of such results for special states appeared in \cite{CDEO08}.
}

\section{Discussion and Outlook.}
Finding the timescale involved in equilibration is an important problem in physics, especially in light of recent advances in experiments with mesoscopic quantum systems \cite{EFG14,LGS15}.  The \bl{timescale} results \bl{here} required us to \bl{restrict our attention to a subclass of measurements, which} are physically sensible for macroscopic \bl{or mesoscopic} systems.  \bl{We focused on the regime of negligible interactions, which includes Luttinger liquids and the Hubbard model in the free superfluid regime.  First,} we found example equilibration timescale bounds for gases of bosons and fermions.  \bl{We also} saw that equilibration occurs quite generally in this setting \bl{of very weak interactions} and is very fast compared to the best known general bounds on the equilibration time.

From here the outlook is promising:\ a natural next step is to extend these results to quasi-free systems, where the Hamiltonian is quadratic in terms of creation and annihilation operators but does not conserve particle number.  Such models arise in the theory of superconductivity.  Other options are to extend the results to interacting models via perturbation theory or to look at equilibration in terms of fermionic or bosonic generating functions \cite{Osborne15}.
\vspace{0.1cm}
\acknowledgments
The author is very grateful to Tobias Osborne, Mathis Friesdorf, Jens Eisert, Marcel Goihl, David Reeb and Shane Dooley for helpful discussions and comments.  \bl{The author also thanks the anonymous referees for useful suggestions and comments.}  This work was supported by the ERC grants QFTCMPS and SIQS, and by the cluster of excellence EXC201 Quantum Engineering and Space-Time Research.  The publication of this article was funded by the Open Access Fund of the Leibniz Universit{\"a}t Hannover.
\vfill
\bibliographystyle{unsrt}

\begin{thebibliography}{10}

\bibitem{EFG14}
J.~Eisert, M.~Friesdorf, and C.~Gogolin.
\newblock {Quantum many-body systems out of equilibrium}.
\newblock {\em Nature Physics}, 11:124, 2015.

\bibitem{LGS15}
T.~Langen, R.~Geiger, and J~Schmiedmayer.
\newblock Ultracold atoms out of equilibrium.
\newblock {\em Annual Review of Condensed Matter Physics}, 6(1):201--217, 2015.

\bibitem{Reimann08}
P.~Reimann.
\newblock Foundation of statistical mechanics under experimentally realistic
  conditions.
\newblock {\em Phys. Rev. Lett.}, 101:190403, 2008.

\bibitem{LPSW09}
N.~Linden, S.~Popescu, A.~J. Short, and A.~Winter.
\newblock Quantum mechanical evolution towards thermal equilibrium.
\newblock {\em Phys. Rev. E}, 79:061103, 2009.

\bibitem{Short10}
A.~J. Short.
\newblock Equilibration of quantum systems and subsystems.
\newblock {\em New Journal of Physics}, 13(5):053009, 2011.

\bibitem{SF12}
A.~J. Short and T.~C. Farrelly.
\newblock Quantum equilibration in finite time.
\newblock {\em New Journal of Physics}, 14(1):013063, 2012.

\bibitem{RK12}
P.~Reimann and M.~Kastner.
\newblock Equilibration of isolated macroscopic quantum systems.
\newblock {\em New Journal of Physics}, 14(4):043020, 2012.

\bibitem{MGLFS14}
A.~S.~L. Malabarba, L.~P. Garc\'{i}a-Pintos, N.~Linden, T.~C. Farrelly, and
  A.~J. Short.
\newblock Quantum systems equilibrate rapidly for most observables.
\newblock {\em Phys. Rev. E}, 90:012121, 2014.

\bibitem{GHT13}
S.~Goldstein, T.~Hara, and H.~Tasaki.
\newblock Time scales in the approach to equilibrium of macroscopic quantum
  systems.
\newblock {\em Phys. Rev. Lett.}, 111:140401, 2013.

\bibitem{CDEO08}
M.~Cramer, C.~M. Dawson, J.~Eisert, and T.~J. Osborne.
\newblock Exact relaxation in a class of nonequilibrium quantum lattice
  systems.
\newblock {\em Phys. Rev. Lett.}, 100:030602, 2008.

\bibitem{CE10}
M.~Cramer and J.~Eisert.
\newblock A quantum central limit theorem for non-equilibrium systems:\ exact
  local relaxation of correlated states.
\newblock {\em New Journal of Physics}, 12(5):055020, 2010.

\bibitem{GLMSW15}
L.~P. Garc\'{i}a-Pintos, N.~Linden, A.~S.~L. Malabarba, A.~J. Short, and
  A.~Winter.
\newblock {Equilibration time scales of physically relevant observables}.
\newblock arXiv:1509.05732, 2015.

\bibitem{UWE12}
C.~Ududec, N.~Wiebe, and J.~Emerson.
\newblock Information-theoretic equilibration:\ the appearance of
  irreversibility under complex quantum dynamics.
\newblock {\em Phys. Rev. Lett.}, 111:080403, 2013.

\bibitem{VZ12}
Vinayak and M.~Žnidarič.
\newblock Subsystem dynamics under random {H}amiltonian evolution.
\newblock {\em Journal of Physics A: Mathematical and Theoretical},
  45(12):125204, 2012.

\bibitem{Cramer12}
M.~Cramer.
\newblock Thermalization under randomized local {H}amiltonians.
\newblock {\em New Journal of Physics}, 14(5):053051, 2012.

\bibitem{BCHHKM12}
F.~G. S.~L. Brand\~ao, P.~\ifmmode \acute{C}\else
  \'{C}\fi{}wikli\ifmmode~\acute{n}\else \'{n}\fi{}ski, M.~Horodecki,
  P.~Horodecki, J.~K. Korbicz, and M.~Mozrzymas.
\newblock Convergence to equilibrium under a random {H}amiltonian.
\newblock {\em Phys. Rev. E}, 86:031101, 2012.

\bibitem{MRA13}
L.~Masanes, A.~Roncaglia, and A.~Ac\'{i}n.
\newblock Complexity of energy eigenstates as a mechanism for equilibration.
\newblock {\em Phys. Rev. E}, 87:032137, 2013.

\bibitem{GE15}
C.~Gogolin and J.~Eisert.
\newblock Equilibration, thermalisation, and the emergence of statistical
  mechanics in closed quantum systems.
\newblock {\em Reports on Progress in Physics}, 79(5):056001, 2016.

\bibitem{RSLS15}
B.~Rauer, T.~Schweigler, T.~Langen, and J.~Schmiedmayer.
\newblock {Does an isolated quantum system relax?}
\newblock arXiv:1504.04288, 2015.

\bibitem{KWW06}
T.~Kinoshita and D.~S. Wenger, T.~Weiss.
\newblock A quantum {N}ewton's cradle.
\newblock {\em Nature}, 440:900, 2006.

\bibitem{Bocchieri57}
P.~Bocchieri and A.~Loinger.
\newblock Quantum recurrence theorem.
\newblock {\em Phys. Rev.}, 107:337, 1957.

\bibitem{Schulman78}
L.~S. Schulman.
\newblock Note on the quantum recurrence theorem.
\newblock {\em Phys. Rev. A}, 18:2379, 1978.

\bibitem{NC00}
M.~A. Nielsen and I.~L. Chuang.
\newblock {\em Quantum Computation and Quantum Information}.
\newblock Cambridge University Press, Cambridge, 2000.

\bibitem{BR97}
O.~Bratteli and D.~Robinson.
\newblock {\em Operator Algebras and Quantum Statistical Mechanics, volumes 1
  and 2}.
\newblock Springer, Berlin, 1997.

\bibitem{Friesdorf15}
M.~Friesdorf.
\newblock {\em Closed Quantum Many-body Systems out of Equilibrium}.
\newblock PhD thesis, Freie Universit{\"a}t Berlin, 2015.

\bibitem{MLS15}
A.~S.~L. Malabarba, N.~Linden, and A.~J. Short.
\newblock Rapid spatial equilibration of a particle in a box.
\newblock {\em Phys. Rev. E}, 92:062128, 2015.

\bibitem{GSGSH13}
A.~L. Gaunt, T.~F. Schmidutz, I.~Gotlibovych, R.~P. Smith, and Z.~Hadzibabic.
\newblock Bose-{E}instein condensation of atoms in a uniform potential.
\newblock {\em Phys. Rev. Lett.}, 110:200406, 2013.

\bibitem{FCJB14}
B.~Fang, G.~Carleo, A.~Johnson, and I.~Bouchoule.
\newblock Quench-induced breathing mode of one-dimensional {B}ose gases.
\newblock {\em Phys. Rev. Lett.}, 113:035301, 2014.

\bibitem{Pauli00}
W.~Pauli.
\newblock {\em Wave Mechanics:\ Volume 5 of Pauli Lectures on Physics}.
\newblock Dover, New York, 2000.

\bibitem{Winitzki08}
S.~Winitzki.
\newblock A handy approximation for the error function and its inverse.
\newblock {\em Unpublished}
  http://sites.google.com/site/winitzki/sergei-winitzkis-files/erf-approx.pdf,
  2008.

\bibitem{LR72}
E.~H. Lieb and D.~W. Robinson.
\newblock The finite group velocity of quantum spin systems.
\newblock {\em Communications in Mathematical Physics}, 28(3):251--257, 1972.

\bibitem{Osborne15}
T.~J. Osborne.
\newblock {P}rivate communication, 2015.

\bibitem{Knapp}
M.~P. Knapp.
\newblock Sines and cosines of angles in arithmetic progression.
\newblock {\em Mathematics Magazine}, 82(5):371--372, 2009.

\bibitem{Horn85}
R.~A. Horn and C.~R. Johnson.
\newblock {\em Matrix Analysis}.
\newblock Cambridge University Press, Cambridge, 1985.

\bibitem{FWBSE14}
M.~Friesdorf, A.~H. Werner, W.~Brown, V.~B. Scholz, and J.~Eisert.
\newblock Many-body localization implies that eigenvectors are matrix-product
  states.
\newblock {\em Phys. Rev. Lett.}, 114:170505, 2015.

\bibitem{Greplova13}
E.~Greplov{\'a}.
\newblock Quantum information with {F}ermionic {G}aussian states.
\newblock Master's thesis, Ludwig-Maximilians-Universit{\"a}t M{\"u}nchen,
  2013.

  \end{thebibliography}

\appendix
\section{Proof of Theorem \ref{th:23}}
\label{app:Proof of Theorem 1}
\begin{reptheorem}{th:23}
 Take a state $\rho(t)=U(t)\rho(0)U^{\dagger}(t)$ of $N$ non-interacting bosons or fermions and a measurement operator counting the number of particles in some modes $M=\sum_ib^{\dagger}_ib_i$, where $b_i=a(\ket{\phi_i})$. Then there exist orthonormal single-particle states $\ket{\psi_{\alpha}(t)}$, evolving via the corresponding single-particle Hamiltonian, such that
 \begin{equation}
 \tr{\rho(t) M}= \sum_{\alpha}n_{\alpha}\tr{\psi_{\alpha}(t) P},
\end{equation}
where $n_{\alpha}$ are occupation numbers adding up to $N$, $P=\sum_i\ket{\phi_i}\bra{\phi_i}$ and $\psi_{\alpha}(t)=\ket{\psi_{\alpha}(t)}\bra{\psi_{\alpha}(t)}$.
\end{reptheorem}

\begin{proof}
There exists a complete set of independent annihilation operators $a_{\alpha}=a(\ket{\psi_{\alpha}})$, such that
 \begin{equation}\label{eq:9}
 \tr{\rho(0)\, a^{\dagger}_{\alpha}a_{\beta}} = 0\ \mathrm{if}\ \alpha\neq\beta.
 \end{equation}
To see this, take any complete set of independent annihilation operators $\bl{d}_{\alpha}$.  The matrix $\tr{\rho\, \bl{d}^{\dagger}_{\alpha}\bl{d}_{\beta}}=C_{\alpha\beta}$ is Hermitian.  So there exists a unitary $U_{\alpha\beta}$, such that in terms of
\begin{equation}
 a^{\dagger}_{\alpha}=\sum_{\beta}U_{\alpha\beta}\bl{d}^{\dagger}_{\beta}
\end{equation}
$\tr{\rho(0)\, a^{\dagger}_{\alpha}a_{\beta}}$ is diagonal.  \bl{So it makes sense to work with the $a_{\alpha}$, which determine the states $\ket{\psi_{\alpha}(0)}=\ket{\psi_{\alpha}}$ mentioned in the statement of the theorem via $a_{\alpha}=a(\ket{\psi_{\alpha}})$.}  Next, expand
\begin{equation}
 a^{\dagger}(\ket{\phi_i})=b^{\dagger}_i=\sum_{\alpha}c_{i,\alpha}a^{\dagger}_{\alpha},
 \end{equation}
where $c_{i,\alpha}$ are complex numbers.  Then
\begin{equation}
\begin{split}
 \tr{\rho(0)\, b^{\dagger}_ib_i} & =\sum_{\alpha}|c_{i,\alpha}|^2\tr{\rho(0)\, a^{\dagger}_{\alpha}a_{\alpha}}\\
 & =\sum_{\alpha}n_{\alpha} |c_{i,\alpha}|^2,
 \end{split}
\end{equation}
where $n_{\alpha}$ is the number of particles in mode $\alpha$.  Next, we use $c_{i,\alpha}=\{a_{\alpha},b^{\dagger}_i\}=\langle \psi_{\alpha}|\phi_i\rangle$ for fermions, or $c_{i,\alpha}=[a_{\alpha},b^{\dagger}_i]=\langle \psi_{\alpha}|\phi_i\rangle$ for bosons, to get
\begin{equation}
 \tr{\rho(0)\, b^{\dagger}_ib_i} =\sum_{\alpha}n_{\alpha}\langle \psi_{\alpha}|\phi_i\rangle\langle \phi_i|\psi_{\alpha}\rangle.
\end{equation}
Therefore,
\begin{equation}
\begin{split}
 \tr{\rho(0)\, M} & =\sum_{\alpha}n_{\alpha}\langle \psi_{\alpha}|P|\psi_{\alpha}\rangle\\
 & = \sum_{\alpha}n_{\alpha}\tr{\psi_{\alpha}P},
 \end{split}
\end{equation}
where $P=\sum_i\ket{\phi_i}\bra{\phi_i}$.  For a state like $a^{\dagger}_1...a^{\dagger}_N\ket{0}$, the first $N$ occupation numbers $n_{\alpha}$ are one, so we get
\begin{equation}
 \tr{\rho(0)\, M} = \sum_{\alpha=1}^{N}\tr{\psi_{\alpha}P}.
\end{equation}
To \bl{account for the dependence on time in the formula}, we use $\bl{a(\ket{\psi_{\alpha}(t)})=}a_{\alpha}(t)=U(t)a_{\alpha}U^{\dagger}(t)$ and
\begin{equation}
 \tr{\rho(0)\, a^{\dagger}_{\alpha}a_{\beta}} = \tr{\rho(t)\, a^{\dagger}_{\alpha}(t)a_{\beta}(t)}.
\end{equation}
The end result is
\begin{equation}
 \tr{\rho(t) M}= \sum_{\alpha}n_{\alpha}\tr{\psi_{\alpha}(t) P}.
\end{equation}
\end{proof}

\section{Fluctuations}
\label{app:fluctuations}
It will be useful to prove the following lemma.
\begin{lemma}\label{lem:1}
Let $a^{\dagger}_{\alpha}=a^{\dagger}(\ket{\psi_{\alpha}})$ be a complete set of creation operators with $\{a_{\alpha},a^{\dagger}_{\beta}\}=\delta_{\alpha\beta}$.  And suppose we have a state $a^{\dagger}_{1}...a^{\dagger}_{N}\ket{0}$ with corresponding density operator $\rho$.  And take a measurement operator counting the number of particles in some modes $M=\sum_i b^{\dagger}_ib_i$, where $b_i=a(\ket{\phi_i})$. Then, we have that
\begin{equation}
\tr{\rho M^2}\leq\tr{\rho M}^2+\tr{\rho M}
\end{equation}
for fermions.  And for bosons, we have
\begin{equation}
\tr{\rho M^2}\leq\tr{\rho M}^2+\tr{\rho M}+N.
\end{equation}
\end{lemma}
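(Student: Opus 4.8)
The plan is to reduce the four-operator expectation to the single-particle projector $P=\sum_i\proj{\phi_i}$ written in the complete orthonormal mode basis $\{\ket{\psi_\alpha}\}$ that diagonalizes the state. First I would expand $b_i=a(\ket{\phi_i})=\sum_\alpha\braket{\phi_i}{\psi_\alpha}a_\alpha$ and collect terms, giving
\begin{equation}
M=\sum_{\alpha\beta}\tilde P_{\alpha\beta}\,a^{\dagger}_\alpha a_\beta,\qquad \tilde P_{\alpha\beta}=\bra{\psi_\alpha}P\ket{\psi_\beta}.
\end{equation}
Crucially, $\tilde P$ is the representation of the \emph{projector} $P$ in a \emph{complete} orthonormal basis, so it is Hermitian and idempotent: $\tilde P=\tilde P^{\dagger}$ and $\tilde P^2=\tilde P$. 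On the Fock state $a^{\dagger}_1\cdots a^{\dagger}_N\ket{0}$ the only nonzero quadratic expectations are $\tr{\rho\,a^{\dagger}_\alpha a_\beta}=n_\alpha\delta_{\alpha\beta}$ with $n_\alpha\in\{0,1\}$ and $\sum_\alpha n_\alpha=N$; in particular $\tr{\rho M}=\sum_\alpha n_\alpha\tilde P_{\alpha\alpha}\le N$, a bound I will use at the end.

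For the fermionic case I would compute $\tr{\rho M^2}=\sum_{\alpha\beta\gamma\delta}\tilde P_{\alpha\beta}\tilde P_{\gamma\delta}\tr{\rho\,a^{\dagger}_\alpha a_\beta a^{\dagger}_\gamma a_\delta}$, using the anticommutator $a_\beta a^{\dagger}_\gamma=\delta_{\beta\gamma}-a^{\dagger}_\gamma a_\beta$ and the fact that a single Slater determinant is a quasi-free (Gaussian) state, so Wick's theorem gives
\begin{equation}
\tr{\rho\,a^{\dagger}_\alpha a^{\dagger}_\gamma a_\beta a_\delta}=n_\alpha n_\gamma\bigl(\delta_{\alpha\delta}\delta_{\gamma\beta}-\delta_{\alpha\beta}\delta_{\gamma\delta}\bigr).
\end{equation}
Substituting and contracting with $\tilde P^2=\tilde P$, the commutator piece reproduces $\tr{\rho M}$, the direct (Hartree) term gives $\tr{\rho M}^2$, and the exchange (Fock) term contributes $-\sum_{\alpha\gamma}n_\alpha n_\gamma|\tilde P_{\alpha\gamma}|^2\le 0$. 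Hence $\tr{\rho M^2}=\tr{\rho M}+\tr{\rho M}^2-\sum_{\alpha\gamma}n_\alpha n_\gamma|\tilde P_{\alpha\gamma}|^2\le\tr{\rho M}+\tr{\rho M}^2$, which is the claimed fermionic inequality.

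For bosons I would repeat the calculation with the commutator $a_\beta a^{\dagger}_\gamma=\delta_{\beta\gamma}+a^{\dagger}_\gamma a_\beta$. The subtlety is that the single-occupancy Fock state is \emph{not} Gaussian, so Wick's theorem does not apply and I would evaluate $\tr{\rho\,a^{\dagger}_\alpha a^{\dagger}_\gamma a_\beta a_\delta}$ directly: for $\alpha\neq\gamma$ the two bosonic pairings survive, giving $n_\alpha n_\gamma(\delta_{\alpha\beta}\delta_{\gamma\delta}+\delta_{\alpha\delta}\delta_{\gamma\beta})$, while the on-site term at $\alpha=\gamma$ equals $n_\alpha(n_\alpha-1)=0$ precisely because each mode holds at most one boson. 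Carrying this through yields the exact identity
\begin{equation}
\tr{\rho M^2}=\tr{\rho M}+\tr{\rho M}^2+\sum_{\alpha\gamma}n_\alpha n_\gamma|\tilde P_{\alpha\gamma}|^2-2\sum_\alpha n_\alpha\tilde P_{\alpha\alpha}^2.
\end{equation}
Dropping the manifestly nonpositive last term and bounding the exchange sum via $n_\gamma\le 1$ and $\sum_\gamma|\tilde P_{\alpha\gamma}|^2=(\tilde P^2)_{\alpha\alpha}=\tilde P_{\alpha\alpha}$ gives $\sum_{\alpha\gamma}n_\alpha n_\gamma|\tilde P_{\alpha\gamma}|^2\le\tr{\rho M}\le N$, hence $\tr{\rho M^2}\le\tr{\rho M}+\tr{\rho M}^2+N$.

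The main obstacle I anticipate is exactly the bosonic four-operator expectation: unlike the fermionic case I cannot lean on Wick's theorem, so I must verify by hand that the only dangerous contribution, the anomalous on-site pairing $\alpha=\gamma=\beta=\delta$, vanishes. This hinges entirely on the single-occupancy assumption $n_\alpha\le 1$; were a mode to hold $n_\alpha\ge 2$ bosons, a surviving $n_\alpha(n_\alpha-1)$ term would inflate the bound by the extra factor mentioned after the lemma. Everything else is bookkeeping, provided one keeps track of the identity $\tilde P^2=\tilde P$, which in turn requires $\{\ket{\psi_\alpha}\}$ to be the full mode basis rather than just the occupied modes.
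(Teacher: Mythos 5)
Your proof is correct and follows essentially the same route as the paper's: both normal-order $M^2$, evaluate the resulting four-operator expectations on the Fock state (your Wick contractions are exactly the paper's explicit identities), and split the result into the Hartree term $\tr{\rho M}^2$ plus an exchange term that is negative for fermions and nonnegative but bounded by $N$ for bosons, with single occupancy killing the on-site bosonic term in both treatments. The only cosmetic differences are that you package the mode overlaps as the idempotent matrix $\tilde P$ and bound the bosonic exchange sum by $\tr{\rho M}\le N$ via idempotence, where the paper writes the same quantity as $\tr{PQPQ}$ and bounds it using $\mathrm{rank}(Q)=N$.
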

\begin{proof}
First,
\begin{equation}
 \tr{\rho M^2} = \sum_{i,j}\tr{\rho (b^{\dagger}_i b_i)(b^{\dagger}_j b_j)}.
\end{equation}
Each term can be written as
\begin{equation}\label{eq:12}
 \tr{\rho (b^{\dagger}_i b_i)(b^{\dagger}_j b_j)} = \tr{\rho\, b^{\dagger}_j b^{\dagger}_i b_i b_j} + \delta_{ij}\tr{\rho\, b^{\dagger}_i b_i},
\end{equation}
which holds for bosons or fermions.  To make sense of this, we write
\begin{equation}
 b^{\dagger}_i=\sum_{\alpha}c_{i,\alpha}a^{\dagger}_{\alpha},
 \end{equation}
where $c_{i,\alpha}$ are complex numbers.  It will turn out below that only terms with $\alpha \in\{1,...,N\}=\mathcal{V}$ contribute.  Now we use the identity
\begin{equation}
 \bra{0}a_N...a_1(a^{\dagger}_{\alpha}a_{\beta})a^{\dagger}_1...a^{\dagger}_N\ket{0}=\delta_{\alpha\beta}.
\end{equation}
to get
\begin{equation}
 \tr{\rho\, b^{\dagger}_i b_i} =\sum_{\alpha,\beta\in\mathcal{V}}c_{i,\alpha}c^*_{i,\beta}\delta_{\alpha\beta}=\sum_{\alpha\in\mathcal{V}}|c_{i,\alpha}|^2.
\end{equation}

For fermions, we use the identity
\begin{equation}\label{eq:14}
 \bra{0}a_N...a_1(a^{\dagger}_{\alpha}a^{\dagger}_{\beta}a_{\gamma}a_{\varepsilon})a^{\dagger}_1...a^{\dagger}_N\ket{0}=[\delta_{\alpha\varepsilon}\delta_{\beta\gamma}-\delta_{\alpha\gamma}\delta_{\beta\varepsilon}]
\end{equation}
to get
\begin{equation}\label{eq:13}
\begin{split}
 \tr{\rho\, b^{\dagger}_j b^{\dagger}_i b_i b_j} & =\sum_{\alpha,\beta,\gamma,\varepsilon\in\mathcal{V}}c_{j,\alpha}c_{i,\beta}c^*_{i,\gamma}c^*_{j,\varepsilon}[\delta_{\alpha\varepsilon}\delta_{\beta\gamma}-\delta_{\alpha\gamma}\delta_{\beta\varepsilon}]\\
 & =\sum_{\alpha,\beta\in \mathcal{V}}[|c_{j,\alpha}|^2|c_{i,\beta}|^2-(c_{j,\alpha}c^*_{i,\alpha})(c_{i,\beta}c^*_{j,\beta})].
 \end{split}
\end{equation}
Now, using $c_{j,\alpha}=\{b_j^{\dagger},a_{\alpha}\}=\langle \psi_{\alpha}|\phi_j\rangle$, we get
\begin{equation}
 \begin{split}
  \sum_{\alpha\in\mathcal{V}}c_{j,\alpha}c^*_{i,\alpha}=\bra{\phi_i}Q\ket{\phi_j},\\
  \mathrm{where}\ Q=\sum_{\alpha\in\mathcal{V}}\ket{\psi_{\alpha}}\bra{\psi_{\alpha}}.
 \end{split}
\end{equation}
Then the second term in equation (\ref{eq:13}) becomes
\begin{equation}\label{eq:15}
 \sum_{\alpha,\beta\in\mathcal{V}}(c_{j,\alpha}c^*_{i,\alpha})(c_{i,\beta}c^*_{j,\beta})=|\bra{\phi_i}Q\ket{\phi_j}|^2,
\end{equation}
which is positive.  Therefore,
\begin{equation}
\begin{split}
 \tr{\rho\, b^{\dagger}_j b^{\dagger}_i b_i b_j} & \leq \sum_{\alpha,\beta\in\mathcal{V}}|c_{i,\alpha}|^2|c_{j,\beta}|^2\\
  & = \tr{\rho\, b^{\dagger}_i b_i}\tr{\rho\, b^{\dagger}_j b_j}.
 \end{split}
\end{equation}
Putting everything together gives
\begin{equation}
\tr{\rho M^2}\leq\tr{\rho M}^2+\tr{\rho M}.
\end{equation}

For bosons, the result is similar, but the identity in equation (\ref{eq:14}) is replaced by
\begin{equation}
\begin{split}
 & \bra{0}a_N...a_1(a^{\dagger}_{\alpha}a^{\dagger}_{\beta}a_{\gamma}a_{\varepsilon})a^{\dagger}_1...a^{\dagger}_N\ket{0}\\
 & =[\delta_{\alpha\varepsilon}\delta_{\beta\gamma}+\delta_{\alpha\gamma}\delta_{\beta\varepsilon}](1-\delta_{\alpha\beta}).
 \end{split}
\end{equation}
Because of this, following similar steps to those used to get equation (\ref{eq:13}), we get
\begin{equation}
 \tr{\rho\, b^{\dagger}_j b^{\dagger}_i b_i b_j} \leq\sum_{\alpha,\beta\in \mathcal{V}}|c_{j,\alpha}|^2|c_{i,\beta}|^2+|\bra{\phi_i}Q\ket{\phi_j}|^2.
\end{equation}
The extra term arising in our expression for $\tr{\rho M^2}$ is
\begin{equation}
\begin{split}
 \sum_{ij}|\bra{\phi_i}Q\ket{\phi_j}|^2 & =\sum_{ij}\bra{\phi_i}Q\ket{\phi_j}\bra{\phi_j}Q\ket{\phi_i}\\
 & =\tr{PQPQ}\leq N,
 \end{split}
\end{equation}
where $P=\sum_{i}\ket{\phi_i}\bra{\phi_i}$ and we used $\mathrm{rank}(Q)=N$.
\end{proof}

A corollary of this is that for pure states of bosons or fermions of the form $a^{\dagger}_{1}...a^{\dagger}_{N}\ket{0}$, the fluctuations satisfy
\begin{equation}
 \sigma_{M}^2=\tr{\rho M^2}-\tr{\rho M}^2 = O(N),
\end{equation}
where $N$ is the number of fermions or bosons in the system.  Furthermore, one can show that for $N$ bosons in the same mode, one also gets $\sigma_M^2 =O(N)$.

To say something more general about the fluctuations in fermion systems, we can also prove the following result.
\begin{theorem}\label{th:2}
 Given a non-interacting $N$ fermion system with corresponding single-particle Hamiltonian that has no degenerate energy levels and no degenerate energy gaps, then the time-average fluctuations satisfy
 \begin{equation}
  \left<\sigma_{M}(\rho(t))\right>\leq\sqrt{N},
 \end{equation}
 when the expectation value of $M$ on any infinite-time average state of $N$ particles is independent of the initial state.  Examples where this is true include the gases discussed in the examples in the main text and systems where $M$ counts the number of particles in a spatial region, provided the Hamiltonian is such that $\av{M}$, the time-average observable in the Heisenberg picture, is proportional to the total number operator.
\end{theorem}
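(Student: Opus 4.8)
The plan is to bound the time-averaged fluctuation by the square root of the time-averaged variance, and then to evaluate that variance in the single-particle energy eigenbasis, where the two non-degeneracy hypotheses make the time average explicit. Since $\sigma_M(\rho(t))^2=\tr{\rho(t)M^2}-\tr{\rho(t)M}^2$, concavity of the square root gives $\avg{\sigma_M(\rho(t))}\leq\sqrt{\avg{\sigma_M(\rho(t))^2}}$, so it suffices to prove that the infinite-time-averaged variance is at most $N$. Passing to the Heisenberg picture and expanding $M=\sum_{E,E'}P_{EE'}c^\dagger_Ec_{E'}$ in the single-particle energy eigenbasis, with $c^\dagger_E=a^\dagger(\ket{E})$ and $P_{EE'}=\bra{E}P\ket{E'}$, each operator product acquires a phase $e^{i(E-E')t}$. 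The absence of degenerate energy levels and of degenerate energy gaps means that, after averaging over time, the only surviving index combinations in both $\avg{\tr{\rho(t)M^2}}$ and $\avg{\tr{\rho(t)M}^2}$ are the fully diagonal terms ($E=E'$ and $F=F'$) and the paired terms ($E=F'$ and $E'=F$).

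The key simplification comes from the hypothesis that $\tr{\avg{\rho}M}$ is independent of the initial $N$-particle state. I would argue that this forces the diagonal matrix elements to be constant, $P_{EE}=\kappa$ for all $E$ (equivalently $\avg{M}\propto N_{\mathrm{op}}$, the condition highlighted in the examples): testing against Slater determinants occupying arbitrary size-$N$ sets of levels shows that $\sum_{E\in S}P_{EE}$ can be constant over all such sets only if every $P_{EE}$ is the same. With $P_{EE}=\kappa$, the diagonal part of the averaged variance collapses into $\kappa^2$ times $\mathrm{Var}(N_{\mathrm{op}})$, the variance of the total number operator. Because the state has a definite particle number $N$, we have $N_{\mathrm{op}}\rho=N\rho$ and this variance vanishes identically, removing the entire diagonal contribution.

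What remains is the paired contribution, which (writing $n_E=c^\dagger_Ec_E$) takes the form $\sum_{E\neq E'}\abs{P_{EE'}}^2(\tr{\rho\,n_E}-\tr{\rho\,n_En_{E'}}-\abs{\tr{\rho\,c^\dagger_Ec_{E'}}}^2)$. I would then discard the two manifestly non-positive terms and control the rest using the projector identity $\sum_{E'}\abs{P_{EE'}}^2=(P^2)_{EE}=P_{EE}=\kappa\leq1$, which yields $\sum_{E\neq E'}\abs{P_{EE'}}^2\tr{\rho\,n_E}\leq\sum_E\tr{\rho\,n_E}=N$. Hence $\avg{\sigma_M(\rho(t))^2}\leq N$ and therefore $\avg{\sigma_M(\rho(t))}\leq\sqrt{N}$, as claimed.

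The main obstacle is the second step: for a generic (non-Gaussian) $N$-fermion state the four-point functions appearing in $\tr{\rho(t)M^2}$ are not determined by the two-point correlation matrix, so Wick's theorem is unavailable and one cannot simply invoke the Gaussian-state identity $\mathrm{Var}(M)=\tr{\Gamma(\openone-\Gamma)}$ used for Slater determinants. The non-degeneracy conditions together with fixed total particle number are precisely what is needed to make the dangerous diagonal four-point terms cancel as a total-number variance, without ever having to evaluate them; once they are gone, the surviving off-diagonal terms are tamed by the projector property alone.
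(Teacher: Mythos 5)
Your proposal is correct, and it reaches the bound by a genuinely different route than the paper. The paper combines concavity of the square root with convexity of $x^2$ to reduce everything to the variance on the dephased state, $\av{\sigma_M(\rho(t))}\leq\sigma_M(\av{\rho})$, and then analyses $\av{\rho}$ in the Schr\"odinger picture: it decomposes $\av{\rho}$ into blocks $\omega_E$ supported on the $N$-particle energy eigenspaces, uses the non-degeneracy of single-particle levels and gaps to show that the two- and four-point functions of each $\omega_E$ are convex combinations of those of Slater-determinant projectors $P(\vec{i})$, and then applies Lemma \ref{lem:1} (the Wick-type bound $\tr{\rho M^2}\leq\tr{\rho M}^2+\tr{\rho M}$ for Slater determinants) together with the hypothesis $\tr{P(\vec{i})M}=m$ to get $\sigma_M(\av{\rho})^2\leq m\leq N$. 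You never pass through Lemma \ref{lem:1}, the block decomposition, or any Slater-determinant computation: you stay in the Heisenberg picture, where the same non-degeneracy hypotheses appear as the resonance condition $E-E'+F-F'=0$ forcing diagonal or paired index patterns; the state-independence hypothesis is converted into the sharper structural statement $P_{EE}=\kappa$ (equivalently $\av{M}=\kappa N_{\mathrm{op}}$); the diagonal four-point contribution then cancels exactly as $\kappa^2$ times the variance of $N_{\mathrm{op}}$, which vanishes on a fixed-$N$ state; and the paired contribution is controlled by the projector identity $P^2=P$, giving the marginally sharper bound $\avg{\sigma_M(\rho(t))^2}\leq\kappa N\leq N$. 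The two proofs exploit the same dephasing mechanism, but yours is more self-contained and makes the role of the hypothesis transparent, whereas the paper's organization lets it reuse Lemma \ref{lem:1}, which it needs anyway for the pure-state fluctuation bounds quoted in the main text. One caveat you share with the paper: both arguments silently interchange the infinite-time average with infinite sums over energy levels, and your extraction of $P_{EE}=\kappa$ additionally needs more single-particle levels than particles, a harmless edge-case assumption.
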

\begin{proof}
Key to this result is the following inequality
\begin{equation}
\begin{split}
 \left<\sigma_{M}(\rho(t))\right> & \leq\sqrt{\left<\tr{\rho(t) M^2}-\tr{\rho(t) M}^2\right>}\\
 & = \sqrt{\tr{\av{\rho} M^2}-\left<\tr{\rho(t) M}^2\right>}\\
 & \leq \sqrt{\tr{\av{\rho} M^2}-\tr{\av{\rho} M}^2}\\
 & = \sigma_{M}(\av{\rho}),
 \end{split}
\end{equation}
where we used concavity of the square root in the first line and convexity of $f(x)=x^2$ to get to the second last line.

So it remains to calculate
\begin{equation}
 \sigma_{M}^2(\av{\rho})=\tr{\av{\rho} M^2}-\tr{\av{\rho} M}^2.
\end{equation}
As $\av{\rho}$ is the infinite-time average of $\rho(t)$, it follows that
\begin{equation}
 \av{\rho}=\sum_{E}p_E\, \omega_E,
\end{equation}
where $p_E$ is a normalized probability distribution and $\omega_E$ is a state with support only on the energy eigenspace corresponding to energy $E$.  Here $E$ labels the energy of the $N$ fermion system.  So $E$ is a sum of $N$ single-particle energies.  Let us write the creation operator that creates a fermion with energy $E_i$ as $a^{\dagger}_{i}$.  Now, the support of $\omega_E$ is spanned by the states
\begin{equation}
 a^{\dagger}_{i_1}...a^{\dagger}_{i_N}\ket{0},
\end{equation}
with $E_{i_1}+...+E_{i_N}=E$.  Therefore, given two different configurations with energy $E$, $\{i_1,...,i_N\}\neq\{j_1,...,j_N\}$,
\begin{equation}
  \bra{0}a_{i_N}...a_{i_1}(a^{\dagger}_{\alpha} a_{\beta})a^{\dagger}_{j_1}...a^{\dagger}_{j_N}\ket{0} = 0
\end{equation}
because single-particle energy levels are non-degenerate.  This implies
\begin{equation}
  \tr{\omega_E\, a^{\dagger}_{\alpha} a_{\beta}} = \sum_{\vec{i}\in C_E}q(\vec{i})\, \tr{P(\vec{i}) a^{\dagger}_{\alpha} a_{\beta}},
\end{equation}
where $\vec{i}$ is short for $\{i_1,...,i_N\}$, $C_E$ denotes all $\{i_1,...,i_N\}$ such that $E_{i_1}+...+E_{i_N}=E$, $q(\vec{i})$ is a normalized probability distribution such that $\sum_{\vec{i}\in C_E}q(\vec{i})=1$ and $P(\vec{i})$ is the projector onto $a^{\dagger}_{i_1}...a^{\dagger}_{i_N}\ket{0}$.  Similarly, it is a consequence of (single-particle) non-degenerate energy gaps that
\begin{equation}
  \tr{\omega_E\, a^{\dagger}_{\alpha}a^{\dagger}_{\beta}a_{\gamma}a_{\epsilon}} = \sum_{\vec{i}\in C_E}q(\vec{i})\, \tr{P(\vec{i})a^{\dagger}_{\alpha}a^{\dagger}_{\beta}a_{\gamma}a_{\epsilon}}.
\end{equation}
Now we can use lemma \ref{lem:1} to upper bound
\begin{equation}
\begin{split}
 & \tr{\av{\rho} M^2} = \sum_E\sum_{\vec{i}\in C_E}p_E\, q(\vec{i})\, \tr{P(\vec{i})M^2}\\
 & \leq \sum_E\sum_{\vec{i}\in C_E}p_E\, q(\vec{i})\left(\tr{P(\vec{i})M}^2+\tr{P(\vec{i})M}\right).
 \end{split}
\end{equation}
Next, we use $\tr{\omega M}=m$ independent of the state $\omega$ for fixed total particle number $N$, when $\omega$ is a time-average state, a special case of which is $\omega = P(\vec{i})$.  So
\begin{equation}
 \tr{\av{\rho} M^2}-\tr{\av{\rho} M}^2\leq m^2+m-m^2=m.
\end{equation}
Finally, using the fact that the expectation value of $M$ is bounded above by $N$ on the $N$ particle subspace leads to
\begin{equation}
 \sigma_{M}(\av{\rho})^2\leq N
\end{equation}
and therefore
\begin{equation}
\langle\sigma_{M}(\rho(t))\rangle\leq \sqrt{N}.
\end{equation}
\end{proof}

\section{Calculations for Fermions in a Box}
\label{app:Calculations for Fermions in a Box}
\bl{Equation (\ref{eq:41}) gives the distinguishability at time $t$,
\begin{equation}\label{eq:1078}
\begin{split}
 D_{P}(\sigma(t),\langle\sigma\rangle) = \abs{\sum_{n\neq m} \cos\left[(n^2-m^2)\nu t\right] \sigma_{nm}P_{mn}}.
 \end{split}
\end{equation}
And from equation (\ref{eq:40}), we have}
\begin{equation}
 \sigma_{nm}P_{mn}=\frac{1}{N}\sum_{k=1}^N\langle n| \psi_k\rangle\langle \psi_k| m\rangle \langle m|P| n\rangle.
\end{equation}
\bl{Evaluating} the inner products\bl{, we} get
\begin{equation}
 \begin{split}
 \langle m|P| n\rangle & = f(m,n)\\
  \langle n| \psi_k\rangle & = \sqrt{2}f(n,2k),
 \end{split}
\end{equation}
where
\begin{equation}
 f(n,m)=\begin{cases}
         \frac{1}{\pi}\left[\frac{\sin\left[(n-m)\frac{\pi}{2}\right]}{n-m}-\frac{\sin\left[(n+m)\frac{\pi}{2}\right]}{n+m}\right]\ \mathrm{if}\ n\neq m\\
         \frac{1}{2}\ \mathrm{if}\ n=m.
        \end{cases}
\end{equation}
Notice that, if both $x$ and $y$ are even or odd, unless $x=y$, then $f(x,y)=0$.  The net result of this is
\begin{equation}
 \sigma_{nm}P_{mn}=\frac{1}{N}\sum_{k=1}^Nf(n,2k)^2
\end{equation}
if $m=2k$ and $n$ is odd and similarly if $n=2k$ and $m$ is odd.  All other terms are zero.  Then subbing this into equation (\bl{\ref{eq:1078}}), the distinguishability $D_{P}(\sigma(t),\langle\sigma\rangle)$ becomes
\begin{equation}
 \frac{2}{N}\abs{\sum_{n\ \mathrm{odd}}\sum_{k=1}^N\cos\left[(n^2-4k^2)\nu t\right]f(n,2k)^2}.
\end{equation}
Furthermore, using $\sin(r\pi/2)=({-1})^{(r-1)/2}$, which holds for odd $r$, we get
\begin{equation}
 f(n,2k)^2=\frac{4}{\pi^2}\frac{4k^2}{(n^2-4k^2)^2},
\end{equation}
for $n\neq 2k$.

Next, we can \bl{find} a bound for the equilibration time.  First, we make the substitution $n=2k+l$, noting that $l> -2k$ since $n>0$, and $l$ must be odd.  It follows that $f(2k+l,2k)^2$ is peaked around small values of $l$, so we can focus on terms with $l\in \mathcal{S}$, where $\mathcal{S}$ contains all odd integers from $-K$ to $K$.  To quantify the resulting error, we use
\begin{equation}
 f(2k+l,2k)^2=\frac{4}{\pi^2}\frac{4k^2}{l^2(4k+l)^2}\leq \frac{4}{\pi^2l^2}.
\end{equation}
The sum of all terms with $l\notin \mathcal{S}$ can be bounded above by
\begin{equation}
\begin{split}
& \sum_{\mathrm{odd}\ l\notin \mathcal{S}}f(2k+l,2k)^2\leq \sum_{\mathrm{odd}\ l\notin \mathcal{S}}\frac{4}{\pi^2l^2}\\
 & \leq\sum_{l=-2k+1}^{-K-2}\frac{4}{\pi^2l^2}+\sum_{l=K+2}^{\infty}\frac{4}{\pi^2l^2}\\
 & \leq \frac{8}{\pi^2}\sum_{l=K+2}^{\infty}\frac{1}{l^2}\leq \frac{8}{\pi^2}\sum_{l=K+2}^{\infty}\left(\frac{1}{l-1}-\frac{1}{l}\right)\\
 & = \frac{8}{\pi^2}\frac{1}{K+1}.
 \end{split}
\end{equation}
So neglecting terms corresponding to $l\notin \mathcal{S}$ results in an upper bound for $D_{P}(\sigma(t),\langle\sigma\rangle)$ of
\begin{equation}
 \frac{2}{N}\abs{\sum_{l\in \mathcal{S}}\sum_{k=1}^N\cos\left[(4kl+l^2)\nu t\right]\frac{4}{\pi^2}\frac{4k^2}{l^2(4k+l)^2}}+\frac{16}{\pi^2}\frac{1}{K+1},
\end{equation}
which follows from the triangle inequality and $|\cos(x)|\leq 1$.  Next, as $f(2k+l,2k)$ is awkward to work with, we use
\begin{equation}
\begin{split}
 \abs{\frac{1}{\pi^2l^2}-f(2k+l,2k)^2} & =\frac{1}{\pi^2\abs{l}}\abs{\frac{8k+l}{(4k+l)^2}}\\ 
 & \leq \frac{3}{2\pi^2}\frac{1}{k|l|},
 \end{split}
\end{equation}
where we used the triangle inequality and the fact that $1/(4k+l)< 1/(2k)$, since $l>-2k$.

This allows us to upper bound $D_{P}(\sigma(t),\langle\sigma\rangle)$ by
\begin{equation}
 \frac{2}{N}\abs{\sum_{l\in \mathcal{S}}\sum_{k=1}^N\cos\left[(4kl+l^2)\nu t\right]\frac{1}{\pi^2l^2}}+\mu,
\end{equation}
where
\begin{equation}
  \mu = \frac{16}{\pi^2}\frac{1}{K+1}+\frac{6}{\pi^2}\frac{\left[\ln(N)+1\right]\left[\ln(K)+1\right]}{N}.
\end{equation}
To get this, we used the triangle inequality and the inequality $\sum_{r=1}^R 1/r<\ln(R)+1$.

Using the triangle inequality again, we get
\begin{equation}\label{eq:10}
 \begin{split}
 & D_{P}(\sigma(t),\langle\sigma\rangle)\\
 & \leq \frac{2}{N}\sum_{l\in\mathcal{S}}\frac{1}{\pi^2l^2}\abs{\sum_{k=1}^N\cos\left[(4kl+l^2)\nu t\right]}+\mu\\
 & =2\sum_{l\in\mathcal{S}}\abs{\frac{\sin[2Nl\nu t]\cos[(2(N+1)+l)l\nu t]}{N\pi^2l^2\sin[2l\nu t]}}+\mu\\
 & \leq 2\sum_{l\in\mathcal{S}}\abs{\frac{\sin[2Nl\nu t]}{N\pi^2l^2\sin[2l\nu t]}}+\mu.
 \end{split} 
\end{equation}
In the third line we used the identity \cite{Knapp}
\begin{equation}
 \sum_{k=0}^{N-1}\cos(\alpha k+\phi)=\frac{\sin[N\alpha/2]\cos[(N-1)\alpha/2 +\phi]}{\sin(\alpha/2)}.
\end{equation}
Let us look at each term in the sum in the last line of equation (\ref{eq:10}) separately.  They have period $\f{\pi}{2l\nu}$, so we need only focus on this interval to find the time average of $D_{P}(\sigma(t),\langle\sigma\rangle)$.

When $2l\nu t$ is close to $0$ or $\pi$, the $\sin[2l\nu t]$ in the denominator in the last line in equation (\ref{eq:10}) is small.  So for $t$ such that $2l\nu t\in[0,\f{1}{Na})$ or $2l\nu t\in(\pi -\f{1}{Na},\pi]$, where $a$ is a small constant we will choose at the end, we bound
\begin{equation}
 \abs{\frac{\sin[2Nl\nu t]}{N\pi^2l^2\sin[2l\nu t]}}\leq \frac{1}{\pi^2 l^2}.
\end{equation}
When $2l\nu t\in[\f{1}{Na},\f{\pi}{2}]$, we can use the inequality $\sin(x)\geq 2x/\pi$ for all $x\in[0,\pi/2]$ to get
\begin{equation}\label{eq:11}
 \abs{\frac{\sin[2Nl\nu t]}{N\pi^2l^2\sin[2l\nu t]}}\leq \frac{1}{N\pi |l|^3 4\nu t}.
\end{equation}

\begin{figure}[ht!]
\centering
    \resizebox{8.0cm}{!}{\includegraphics{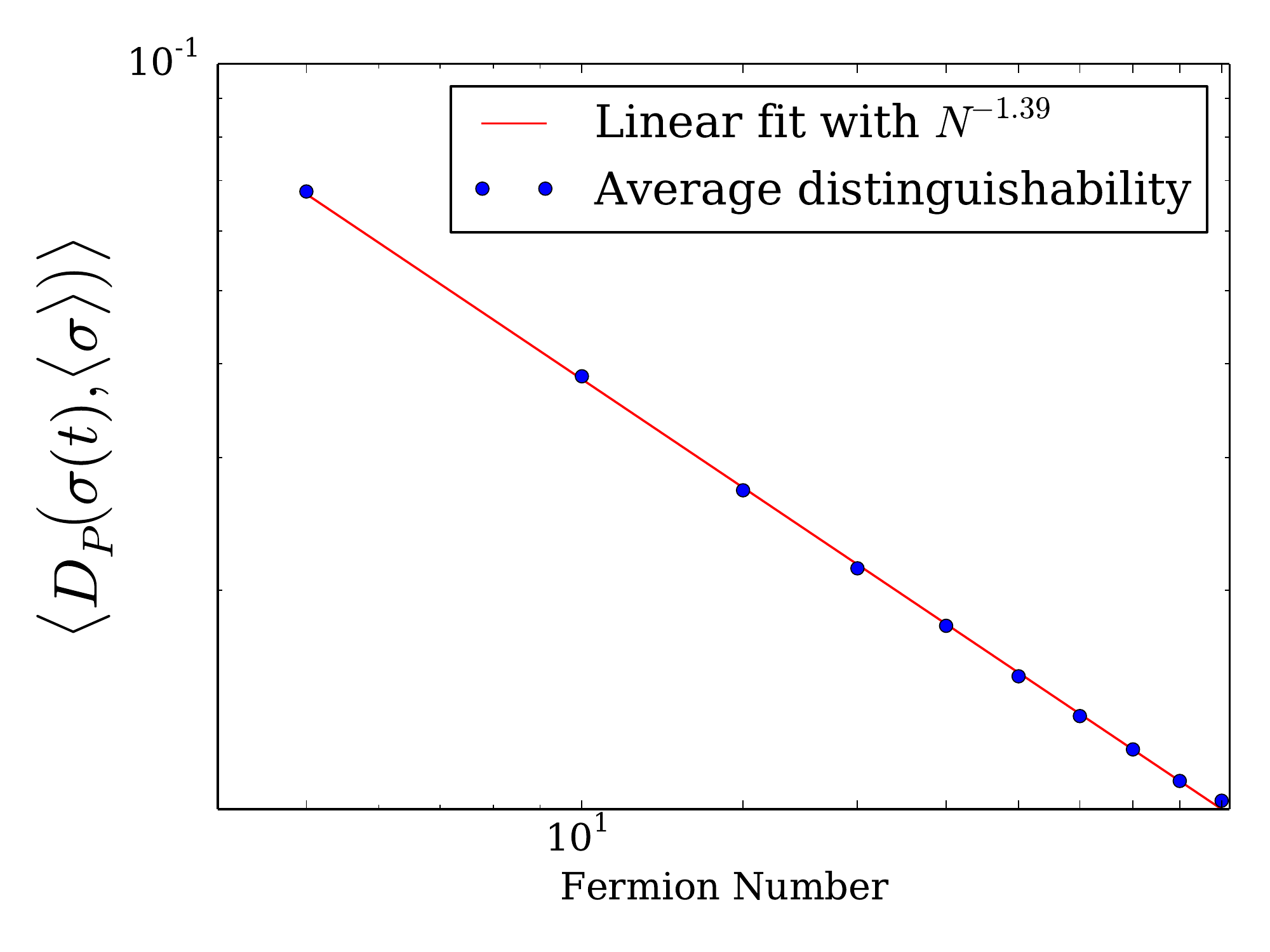}}
    \footnotesize{\caption[Plot of equilibration]{This plot shows the time-average distinguishability $\langle D_{P}(\sigma(t),\langle\sigma\rangle)\rangle$ as a function of particle number from $5$ to $50$ fermions on a log scale.  Numerically, $\langle D_{P}(\sigma(t),\langle\sigma\rangle)\rangle =O(N^{-1.39})$, which is faster than the bound in equation (\ref{eq:16}), which was $O(\ln(N)^2/N)$.}
    \label{fig:fermions_in_a_box2}}
\end{figure}

To find the time average of $D_{P}(\sigma(t),\langle\sigma\rangle)$, we use the fact that the average over $[\f{\pi}{4l\nu},\f{\pi}{2l\nu}]$ is the same as that over $[0,\f{\pi}{4l\nu}]$ by symmetry.  So we need only average each term over $[0,\f{\pi}{4l\nu}]$.  The result is
\begin{equation}
\begin{split}
 & \left< \abs{\frac{\sin[2Nl\nu t]}{N\pi^2l^2\sin[2l\nu t]}}\right> \\ & \leq \frac{1}{\pi^2 l^2}\frac{2}{\pi Na}
 + \frac{4l\nu}{\pi}\left(\int_{1/(2Nal\nu)}^{\pi/(4l\nu)}\frac{\mathrm{d}t}{N\pi |l|^3 4\nu t}\right)\\
 & = \frac{2}{\pi^3 l^2 N a} + \frac{\ln(\pi N a/2)}{N\pi^2 l^2}.
 \end{split}
\end{equation}
Plugging this into equation (\ref{eq:10}), we get
\begin{equation}
\left< D_{P}(\sigma(t),\langle\sigma\rangle)\right> \leq \frac{4}{3\pi N a} + \frac{2\ln(\pi N a/2)}{3N} + \mu,
\end{equation}
where we used $\sum_{l\in \mathcal{S}}1/l^2\leq 2\sum_{l=1}^{\infty}1/l^2=\pi^2/3$.  If we choose $K=N$, then we see that
\begin{equation}\label{eq:16}
 \left< D_{P}(\sigma(t),\langle\sigma\rangle)\right>=O\left(\frac{\ln(N)^2}{N}\right).
\end{equation}
So for large $N$, this is extremely small and equilibration occurs.  In fact, as figure \ref{fig:fermions_in_a_box2} shows, the time-average distinguishability decays faster with $N$ than the bound here.

In figure \ref{fig:fermions} we saw that $D_{P}(\sigma(t),\langle\sigma\rangle)$ becomes small and then stays small for most times.  In order to find the equilibration time, we can upper bound the time it takes for the distinguishability to become small.  Plugging $t=\frac{1}{2Na\nu}$ into the bound in equation (\ref{eq:11}), gives the bound
\begin{equation}
 D_{P}(\sigma(t),\langle\sigma\rangle)  \leq \sum_{l\in\mathcal{S}}\frac{a}{\pi |l|^3 }+\mu
  \leq \frac{\pi a}{3}+\mu.
\end{equation}
Here we choose $a$ to be a small constant such that the distinguishability is small at $t=\frac{1}{2Na\nu}$.  Then the equilibration time is bounded above by
\begin{equation}
 T_{\mathrm{eq}}=\frac{1}{2Na\nu}.
\end{equation}

\bl{
\subsection{Three dimensions}}
We can extend this to a three dimensional example in a way similar to the extension to three dimensions for a particle in a box in \cite{MLS15}.  Suppose the initial state of the $N$ fermion system is
\begin{equation}
 \prod_{\vec{j}\in J}a^{\dagger}(\ket{\psi_{\vec{j}}})\ket{0},
\end{equation}
where $J$ is the set of three-component vectors with components in $\{1,...,J_{\max}\}$, so we have $N=J_{\max}^3$.  And let $\ket{\psi_{\vec{j}}}$ be the energy eigenstate for a particle in the left half of a box labelled by $\vec{j}$ analogous to $\ket{\psi_k}$ in one dimension in equation (\ref{eq:42}).  Suppose the observable we are considering is that which counts the number of particles in the left half of the box.

After mapping to the single-particle picture, the distinguishability becomes
\begin{equation}
 D_{P}(\sigma(t),\avg{\sigma})=D_{P_x}(\sigma_x(t),\avg{\sigma_x}),
\end{equation}
where $P=P_x\otimes \openone_y \otimes \openone_z$ is the projector onto the left half of the box.  Here $\sigma_x(t)$ is the reduced state of the system on $\mathcal{H}_x$, the Hilbert space corresponding to the $x$ degrees of freedom.  We have
\begin{equation}
 \sigma_x(0)=\frac{1}{J_{\max}}\sum_{j=1}^{J_{\max}}\ket{\psi_{j}}\bra{\psi_{j}},
\end{equation}
where $\ket{\psi_{j}}$ is the $j$th energy eigenstate of a particle trapped in the left hand side of a one dimensional box.  So the problem is now equivalent to the one dimensional example.  Therefore, the equilibration timescale is at most $T_{\mathrm{eq}}=O(1/J_{\max})= O(1/N^{1/3})$.\newline

\bl{
\section{Single-particle equilibration}
Here we will derive a useful formula that shows when equilibration occurs.  The proof is very similar to one in \cite{MGLFS14}, mainly differing by using a different weight for the time average.
\begin{lemma}
\label{th:exptval}
Take a finite dimensional system evolving via a time independent Hamiltonian in the state $\sigma(t)$.  For any
operator $A$ and time $T >0 $
\begin{equation}
\label{eq:eqofexpct}
\begin{split}
 \frac{\av{ |\tr{\sigma(t)A} - \tr{ \av{\sigma} A }|^2 }_T}{\norm{A}^2} \leq
 \frac{c_1}{\de}\max_{\beta}\sum_{\alpha}e^{-(G_{\alpha}-G_{\beta})^2\f{T^2}{16}},
\end{split}
\end{equation}
where $c_1=e\f{\sqrt{\pi}}{2}$ and $G_{\alpha}=E_i-E_j$ denote the non-zero energy gaps, so $E_i\neq E_j$.  Also,
\begin{equation}
 \frac{1}{d_{\mathrm{eff}}}=\sum_{E}\left(\tr{\sigma(0) P_{E}}\right)^2,
\end{equation}
where $P_E$ is the projector onto the energy eigenspace corresponding to energy $E$.
\end{lemma}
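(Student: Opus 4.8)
The plan is to diagonalize the dynamics in the energy eigenbasis, reduce the left-hand side to a sum over energy gaps, and then suppress the off-resonant contributions with a carefully chosen time-averaging weight. First I would write $H=\sum_a E_a\proj{a}$ and expand $\tr{\sigma(t)A}=\sum_{a,b}\sigma_{ab}A_{ba}\,e^{-i(E_a-E_b)t}$, where $\sigma_{ab}=\bra{a}\sigma(0)\ket{b}$ and $A_{ba}=\bra{b}A\ket{a}$. Since $\av{\sigma}$ is obtained by dephasing in energy, i.e.\ by discarding all terms with $E_a\neq E_b$, the fluctuation is $D(t):=\tr{\sigma(t)A}-\tr{\av{\sigma}A}=\sum_{\alpha}v_\alpha e^{-iG_\alpha t}$, where $\alpha$ runs over ordered pairs $(a,b)$ with $E_a\neq E_b$, the gap is $G_\alpha=E_a-E_b$, and $v_\alpha=\sigma_{ab}A_{ba}$. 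Indexing by pairs rather than by distinct gap values is what will eventually produce the factor $D_G$: as $T\to\infty$ the kernel below collapses onto $\alpha$ with $G_\alpha=G_\beta$, of which there are at most $D_G$.

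The key step, and the only real departure from \cite{MGLFS14}, is the choice of averaging weight. Rather than averaging uniformly over $[0,T]$, I would use a Gaussian weight $g_T(t)=\f{2}{\sqrt{\pi}\,T}e^{-4t^2/T^2}$, normalized so that $\int\!\mathrm{d}t\,g_T(t)=1$, whose Fourier transform is again Gaussian, $\int\!\mathrm{d}t\,g_T(t)e^{-iGt}=e^{-G^2T^2/16}$. Then $\av{|D(t)|^2}_T=\sum_{\alpha,\beta}v_\alpha v_\beta^{*}\,e^{-(G_\alpha-G_\beta)^2T^2/16}$, so pairs of gaps differing by more than $O(1/T)$ are exponentially suppressed. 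Because the left-hand side is real and the kernel is real and even, I can symmetrize and apply the arithmetic--geometric-mean bound $|v_\alpha v_\beta^{*}|\leq\f{1}{2}(|v_\alpha|^2+|v_\beta|^2)$ to factor the double sum, giving
\begin{equation}
\av{|D(t)|^2}_T\leq\Big(\sum_\alpha|v_\alpha|^2\Big)\,\max_\beta\sum_\alpha e^{-(G_\alpha-G_\beta)^2T^2/16},
\end{equation}
where the symmetry of the Gaussian lets me write $\max_\beta\sum_\alpha$ in place of $\max_\alpha\sum_\beta$.

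It then remains to bound the resonant weight $\sum_\alpha|v_\alpha|^2=\sum_{E_a\neq E_b}|\sigma_{ab}|^2|A_{ba}|^2$ by $\norm{A}^2/\de$, with $1/\de=\sum_E p_E^2$ and $p_E=\tr{\sigma(0)P_E}$. This is the part inherited from \cite{MGLFS14}: one controls the coherences $|\sigma_{ab}|^2$ using the positivity of $\sigma(0)$ together with $|A_{ba}|\leq\norm{A}$, and threading the normalization of $g_T$ through these estimates produces the overall constant $c_1=e\sqrt{\pi}/2$ and the claimed inequality. I expect this last bound to be the main obstacle: recovering the $1/\de$ scaling is delicate, since the crude estimates $|A_{ba}|\leq\norm{A}$ and $|\sigma_{ab}|^2\leq\sigma_{aa}\sigma_{bb}$ only give $\norm{A}^2(1-1/\de)$, so one must exploit the interplay between the bounded operator norm (equivalently $\norm{A}_{\mathrm{HS}}^2\leq d\,\norm{A}^2$) and the positivity constraints to obtain the effective-dimension scaling and the precise constant. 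The payoff, once this holds, is that the Gaussian kernel integrates to $\int\!\mathrm{d}G\,e^{-G^2T^2/16}=4\sqrt{\pi}/T$, which is exactly what allows the density-of-states step in Theorem \ref{th:54} to replace $\max_\beta\sum_\alpha e^{-(G_\alpha-G_\beta)^2T^2/16}$ by $D_G+c_2\,n_{\max}d/T$ with $c_2=4\sqrt{\pi}$.
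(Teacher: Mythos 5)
Your proposal follows the same skeleton as the paper's proof (expansion over energy gaps, Gaussian reweighting of the time average, a row-sum bound on the kernel --- your AM--GM symmetrization is an elementary equivalent of the paper's bound $\norm{M}\leq\sqrt{\vertiii{M}_{\infty}\vertiii{M}_{1}}$), but it has two genuine gaps. The first concerns the time average itself. In the lemma, $\av{\cdot}_T$ is the \emph{uniform} average over $[0,T]$, so your displayed equality $\av{|D(t)|^2}_T=\sum_{\alpha,\beta}v_\alpha v_\beta^{*}e^{-(G_\alpha-G_\beta)^2T^2/16}$ is false as written: it holds only for the Gaussian-weighted average. The missing step is the domination argument: because $|D(t)|^2\geq 0$ and $\f{1}{T}\leq \f{e}{T}e^{-4(t-T/2)^2/T^2}$ for all $t\in[0,T]$, the uniform average is bounded by the (unnormalized) Gaussian-weighted one, and computing the resulting kernel gives $|M_{\alpha\beta}|=\f{e\sqrt{\pi}}{2}e^{-(G_\alpha-G_\beta)^2T^2/16}$. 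This is precisely where $c_1=e\sqrt{\pi}/2$ comes from; it does not come from ``threading the normalization of $g_T$'' through the $1/\de$ estimate, which carries no constant at all. Note also that the domination requires centering the Gaussian at $T/2$: your Gaussian centered at $t=0$ only satisfies $e^{-4t^2/T^2}\geq e^{-4}$ on $[0,T]$, which would yield the far worse constant $e^4\sqrt{\pi}/2$.

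The second gap is the serious one: the bound $\sum_\alpha|v_\alpha|^2\leq\norm{A}^2/\de$, which you correctly identify as the crux and correctly observe cannot be obtained from the termwise estimates $|A_{ba}|\leq\norm{A}$ and $|\sigma_{ab}|^2\leq\sigma_{aa}\sigma_{bb}$ (these give $\norm{A}^2(1-1/\de)$), is left unproven, so the proposal does not actually establish the lemma. The paper's resolution is to keep the operator structure instead of estimating matrix entries. One first reduces to pure states, choosing the eigenbasis so that $\ket{\psi(0)}$ overlaps a single eigenvector per energy level (mixed states are recovered at the very end by purification, which preserves expectation values, $\norm{A}$, and $\de$). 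Then $\sum_{i\neq j}|v_{(i,j)}|^2\leq\sum_{i,j}|c_i|^2|c_j|^2|A_{ji}|^2=\tr{A\av{\sigma}A^{\dagger}\av{\sigma}}$, and one applies the Cauchy--Schwarz inequality for the Hilbert--Schmidt inner product, $\tr{A\av{\sigma}A^{\dagger}\av{\sigma}}\leq\sqrt{\tr{A^{\dagger}\!A\,\av{\sigma}^2}\tr{AA^{\dagger}\av{\sigma}^2}}$, followed by $\tr{PQ}\leq\norm{P}\tr{Q}$ for positive $P,Q$, to get $\norm{A}^2\tr{\av{\sigma}^2}=\norm{A}^2/\de$. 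That operator-level Cauchy--Schwarz step --- not the Hilbert--Schmidt dimension bound $\norm{A}_{\mathrm{HS}}^2\leq d\,\norm{A}^2$ you suggest, which plays no role --- is the ``interplay between the operator norm and positivity'' that rescues the $1/\de$ scaling.
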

\begin{proof}
First, we will take $\sigma(t)$ to be pure, extending the result to mixed states at the end.  Because $\sigma(t)=\ket{\psi(t)}\bra{\psi(t)}$ is pure, we can choose an eigenbasis of $H$ where $\ket{\psi(t)}$ only overlaps with a single eigenstate $\ket{n}$ for each energy level.  This means that degenerate energy levels will not cause any problems.  The state at time $t$ is
\begin{equation}
\ket{\psi(t)} = \sum_{n} c_n e^{-i E_n t} \ket{n},
\end{equation}
where $c_n = \braket{n}{ \psi(0)}$.  The time-average state is $\av{\sigma} = \sum_n |c_n|^2 \proj{n}$, and the effective dimension is given by $1/\de = \sum_n |c_n|^4 = \tr{\av{\sigma}^2}$.

Using the notation $A_{ij}= \bra{i} A \ket{j}$, we have
\begin{align}
\label{eq:eqexpt1}
\langle |&\tr{\sigma(t)A}  -\tr{\av{\sigma} A}|^2 \rangle_T\\
& = \Big\langle \Big|\sum_{i \neq j}  (c_j^*A_{ji} c_i) e^{-i (E_i-E_j)t} \Big|^2\Big\rangle_T \nonumber \\
 & \leq  \sum_{\scriptsize \begin{array}{c} i \neq j \\ k\neq l \end{array}} \!\!\! \! (c_j^* A_{ji} c_i)  (c_l^* A_{lk} c_k)^* \av{e^{i[(E_k-E_l) - (E_i - E_j)]t} }_T.
\end{align}
To make our expressions more concise, we denote non-zero energy gaps by $G_{\beta}=E_i-E_j$, with $\beta = (i,j)$, where $i\neq j$.  We also define the vector
\begin{equation}
v_{\beta}=v_{(i,j)} =  c_j^* A_{ji} c_i
\end{equation}
and the Hermitian matrix
 \begin{equation}
 M_{\alpha\beta} = \av{ e^{i(G_{\alpha} - G_{\beta})t} }_T.
\end{equation}
Equation (\ref{eq:eqexpt1}) becomes
\begin{equation}
\begin{split}
\av{ |\tr{\sigma(t)A} - \tr{\av{\sigma} A}|^2 }_T &=  \sum_{\alpha, \beta} v_{\alpha}^* M_{\alpha\beta}  v_{\beta}\\
 & \leq  \norm{M} \sum_{\alpha}  |v_{\alpha}|^2   \\
 & =\norm{M} \sum_{i\neq j}  |c_i|^2 |c_j|^2 |A_{ji}|^2  \\
 & \leq \norm{M} \sum_{i, j}  |c_i|^2 |c_j|^2 |A_{ji}|^2 \\
 & = \norm{M} \tr{A \av{\sigma} A^{\dag} \av{\sigma} }.
 \end{split}
 \end{equation}
 As $\tr{A^{\dag} B}$ defines an inner product for operators, we may use the Cauchy-Schwartz inequality.  We can also use the inequality $\tr{PQ} \leq \|P\| \tr{Q} $, which holds for $P$ and $Q$ positive operators.  Then we get
\begin{equation}
 \begin{split}
 &\av{ |\tr{\sigma(t)A} - \tr{\av{\sigma} A}|^2 }_T\\
 &\leq  \norm{M}\sqrt{\tr{A^{\dag}\!A\, \av{\sigma}^2} \tr {A A^{\dag} \av{\sigma}^2}} \\
 &\leq \norm{M}\norm{A}^2 \tr{\av{\sigma}^2} \\
&=\frac{\norm{M}\norm{A}^2}{\de}.
 \end{split}
\end{equation}
Next, we can use the inequality for matrix norms \cite{Horn85}
\begin{equation}
\label{eqn:norm}
\begin{split}
\norm{M} & \leq \sqrt{\vertiii{M}_{\infty}\vertiii{M}_{1}}\\
&= \max_{\beta} \sum_{\alpha} |  M_{\alpha\beta}|,
\end{split}
\end{equation}
where $\vertiii{M}_{1}$ and $\vertiii{M}_{\infty}$ are the column and row matrix norms respectively.  The second line holds because $M$ is hermitian, implying $\vertiii{M}_{\infty}=\vertiii{M}_{1}$.

Our next task is to deal with $M_{\alpha\beta}= \av{ e^{i(G_{\alpha} - G_{\beta})t} }_T$.  This can be simplified by replacing the original time average over the interval $[0,T]$ by a differently weighted average \cite{MGLFS14,FWBSE14}.  This works because the quantity we are averaging (see equation (\ref{eq:eqexpt1})) is positive, and because the new weight $f(t)$ satisfies $f(t)\geq 1/T$ on the interval $[0,T]$.

We will choose the weight to be a Gaussian.  Then for any positive $g(t)$, we have
\begin{equation}
\begin{split}
  \frac{1}{T}\int_0^T\! \textrm{d}t\, g(t) & \leq \frac{e}{T}\int_0^T\! \textrm{d}t\, g(t)\, e^{-4(t-T/2)^2/T^2} \\
 & \leq \frac{e}{T}\int_{{-}\infty}^{\infty}\! \textrm{d}t\, g(t)\, e^{-4(t-T/2)^2/T^2}.
 \end{split}
\end{equation}
Here, $e=\exp(1)$.

Employing this weighted averaging, the matrix elements of M are
\begin{equation}
\label{eqM}
M_{\alpha\beta} = \frac{e}{T}\int_{{-}\infty}^{\infty}\!\textrm{d}t\, e^{-4(t-T/2)^2/T^2}\, e^{i(G_{\alpha} - G_{\beta})t}.
\end{equation}
So we get
\begin{equation}
\label{eqM1}
|M_{\alpha\beta}|= c_1\,e^{-(G_{\alpha} - G_{\beta})^2T^2/16},
\end{equation}
where $c_1=e\sqrt{\pi}/2$.  From equation (\ref{eqn:norm}), we get
\begin{equation}
\norm{M} \leq c_1\max_{\beta}\sum_{\alpha} e^{-(G_{\alpha} - G_{\beta})^2T^2/16}.
\end{equation}
Finally, we arrive at
\begin{equation}
\begin{split}
 \frac{\av{ |\tr{\sigma(t)A} - \tr{ \av{\sigma} A }|^2 }_T}{\norm{A}^2} \leq
 \frac{c_1}{\de}\max_{\beta}\sum_{\alpha} e^{-(G_{\alpha} - G_{\beta})^2\f{T^2}{16}}.
\end{split}
\end{equation}

The final step of the proof is to extend the result to mixed states by doing a purification \cite{NC00}, as in \cite{Short10}.  Denote the system's Hilbert space by $\mathcal{H}_S$.  Then we can define a pure state $\ket{\psi(0)}$ on $\mathcal{H}_S \otimes \mathcal{H}_A$, with $\mathrm{dim}(\mathcal{H}_S)=\mathrm{dim}(\mathcal{H}_A)$, with the property that the reduced state on the first system is $\mathrm{tr}_{A}\left[\ket{\psi(0)}\bra{\psi(0)}\right]=\sigma(0)$.  We recover the original evolution $\sigma(t)$ of the first system by evolving $\ket{\psi(t)}$ under the joint Hamiltonian $H \otimes \openone$.  Crucially, the expectation value of any operator $A$ on the state $\sigma(t)$ is the same as the expectation value of $A \otimes \openone$ on the purified state on the joint system.  Also, $\|A\|=\|A\otimes \openone\|$.  Finally, the effective dimension of the purified system is equal to the effective dimension of the original system, which can be seen from from $\mathrm{tr}[P_E\sigma(0)]=\textrm{tr}[P_E\otimes \openone \proj{\psi(0)}]$.
\end{proof}

Our remaining task is to simplify things in terms of the density of states.  In the sum over $\alpha$, we can separate out the time dependent term, which has $G_{\alpha}\neq G_{\beta}$, and evaluate the sum by making the density of states approximation.  We make the replacement $\sum_{E}= \int \!\mathrm{d}{E}\,n(E)$, where $n(E)$ is the density of states, so that
\begin{equation}
\begin{split}
& \max_{\beta} \sum_{\substack{\alpha\\G_{\alpha}\neq G_{\beta}}} e^{-(G_{\alpha} - G_{\beta})^2\f{T^2}{16}}\\
= & \max_{\beta}\!\!\!\sum_{\substack{E^{\prime}\\E^{\prime}\neq E\\E-E^{\prime}\neq G_{\beta}}}\!\!\!\sum_{E} e^{-(E-E^{\prime}-G_{\beta})^2\f{T^2}{16}}\\
= & \max_{\beta}\int_{E_{\min}}^{E_{\max}}\!\mathrm{d}E^{\prime}\,n(E^{\prime})\int_{E_{\min}}^{E_{\max}}\!\mathrm{d}E\,n(E)\, e^{-(E-E^{\prime}-G_{\beta})^2\f{T^2}{16}}\\
\leq &\, n_{\max}\max_{\beta}\int_{E_{\min}}^{E_{\max}}\!\mathrm{d}E^{\prime}\,n(E^{\prime})\int_{E_{\min}}^{E_{\max}}\!\mathrm{d}E \,e^{-(E-E^{\prime}-G_{\beta})^2\f{T^2}{16}}\\
\leq &\, n_{\max} \int_{E_{\min}}^{E_{\max}}\!\mathrm{d}E^{\prime}\,n(E^{\prime})\frac{4\sqrt{\pi}}{T}\\
= &\, n_{\max} \frac{4\sqrt{\pi}d}{T},
\end{split}
\end{equation}
where $n_{\max}=\max_En(E)$ and $d$ is the dimension of the particle's state space.  We also used $\int \!\mathrm{d}{E}\,n(E)=d$ to get the last line.

We define $D_G$ to be the maximum number of gaps $G_{\alpha}$ of the same size.  In other words,
\begin{equation}
D_G=\max_{\beta} \sum_{\substack{\alpha\\G_{\alpha}= G_{\beta}}}1.
\end{equation}
So finally we get
\begin{equation}
\begin{split}
\frac{\av{ |\tr{\sigma(t)A} - \tr{ \av{\sigma} A }|^2 }_T}{\norm{A}^2} \leq
\frac{c_1}{\de}\left[D_G+n_{\max} \frac{4\sqrt{\pi}d}{T}\right].
\end{split}
\end{equation}
It is good to point out that the density of states approximation misses degenerate energy gaps because they are measure zero.

\section{Free lattice models}\label{sec:Free lattice models}
We want to see when equilibration occurs for free lattice models, and also estimate the timescale.
A key step is to prove equilibration with respect to single-mode measurements.  A consequence is corollary \ref{cor:143}, which implies equilibration occurs for any measurement on $K$ local modes, provided $K$ is relatively small, and the initial state is Gaussian and commutes with the total number operator.

Consider the observable $M=a^{\dagger}(\ket{\phi})a(\ket{\phi})$, which counts the number of particles in the state $\ket{\phi}$.  By applying equation (\ref{eq:18}), we have
\begin{equation}
\tr{\rho(t) M}= \sum_{j=1}^{N}n_j\tr{\psi_j(t) \ket{\phi}\bra{\phi}},
\end{equation}
The trick now is to switch to the Heisenberg picture.  Then
\begin{equation}
\tr{\rho(t) M}= \sum_{j=1}^{N}n_j\tr{\psi_j \ket{\phi({-t})}\bra{\phi({-}t)}}.
\end{equation}
Because of this, we can think of $\ket{\phi({-t})}\bra{\phi({-}t)}=\sigma(-t)$ as the state of a particle.

For any fermionic state, or a bosonic state with at most one boson in $N$ orthogonal modes, we have
\begin{equation}
 \Pi=\sum_{j=1}^{N}n_j\psi_j \leq \openone.
\end{equation}
So we can think of this as a measurement (POVM) operator.  For a system of bosons with more than one boson in each mode, the result can be extended simply by factoring out the maximum number of bosons in a mode, but we will not include this in the following formulas.  Then we have
\begin{equation}\label{eq:102}
\begin{split}
 \abs{\tr{\rho(t) M}-\tr{\avg{\rho} M}} & = \abs{\tr{\Pi\sigma(-t)}-\tr{\Pi\avg{\sigma}}}\\
 & = D_{\Pi}(\sigma(-t),\avg{\sigma}).
 \end{split}
\end{equation}
Applying equation (\ref{eq:5}) in theorem \ref{th:54}, we get
\begin{equation}\label{eq:98}
\begin{split}
\av{D_{\Pi}(\sigma(-t),\avg{\sigma})}_T\leq\sqrt{\frac{c_1}{\de}\left[D_G + c_2\frac{n_{\max}d}{T}\right]}.
 \end{split}
\end{equation}
The first task is to estimate $1/\de$.  Because the measurement operator $M=a^{\dagger}(\ket{\phi(0)})a(\ket{\phi(0)})$ is local, the state $\ket{\phi(0)}\bra{\phi(0)}$ is localized.  But for free lattice models, the energy eigenstates are spread out over the whole lattice:\ they often have the form $\ket{\phi(\vec{p})}\ket{\vec{p}}$, where $\ket{\vec{p}}$ is the momentum state with momentum $\vec{p}$, and $\ket{\phi(\vec{p})}$ is the state of the extra degree of freedom (spin or particle type, for example).  In that case, given an energy eigenstate $\ket{E}$ we get $|\langle E\ket{\phi(0)}|^2\leq l/V$, where the factor of $l$ appears because $\ket{\phi(0)}$ is spread over at most $l$ lattice sites.  This implies
\begin{equation}
 \frac{1}{d_{\mathrm{eff}}}=\sum_{E}\left(\tr{\sigma(0) P_{E}}\right)^2\leq \frac{n_d^2l^2d}{V^2},
\end{equation}
where $d$ is the dimension of the Hilbert space and $n_d$ is the maximum degeneracy of the energy levels.  We also have that $d=s V$, where $s$ is the number of orthogonal states at each site.  For a spin $1/2$ particle, we would have $s=2$.  Plugging this into equation (\ref{eq:98}), we get
\begin{equation}
\begin{split}
\av{D_{\Pi}(\sigma(-t),\avg{\sigma})}_T\leq l\sqrt{\frac{c_3}{d}\left[D_G + c_2\frac{n_{\max}d}{T}\right]},
 \end{split}
\end{equation}
where $c_3=c_1 n_d^2s^2$.  

The last task is to estimate $n_{\max}=\max_En(E)$.

\subsection{Density of states for lattice models}\label{eq:Density of states for lattice models}
For lattice models, estimating $\max_En(E)$ causes problems because $n(E)$ often diverges.  Fortunately, we can truncate to a slightly smaller energy range, such that $\max_En(E)$ is bounded, and the error caused by the truncation is small.

Let us take a nearest-neighbour hopping model on a line as an example.  The corresponding single-particle Hamiltonian is
\begin{equation}
 H=\frac{1}{2}\sum_{i=1}^{L}\big(\ket{i}\bra{i+1}+\ket{i+1}\bra{i}\big),
\end{equation}
where we are assuming translational invariance, so the site at $L+1$ is identified with site $1$.  Switching to momentum space diagonalizes this, and we get the dispersion relation $E(p)=\cos(p)$, where $p=2\pi k/L$, with $k\in \{1,...,L\}$.  (In making the density of states approximation, we assume that $L$ is large but finite.)  The density of states is
\begin{equation}
\begin{split}
 n(E) & = \frac{L}{\pi}\frac{\textrm{d}p}{\textrm{d}E} = \frac{L}{\pi}\frac{1}{\abs{\sin\left(\arccos(E)\right)}}\\
 & =\frac{L}{\pi}\frac{1}{\sqrt{1-E^2}}.
 \end{split}
\end{equation}
\begin{figure}[ht!]
 \resizebox{7.5cm}{!}{\includegraphics{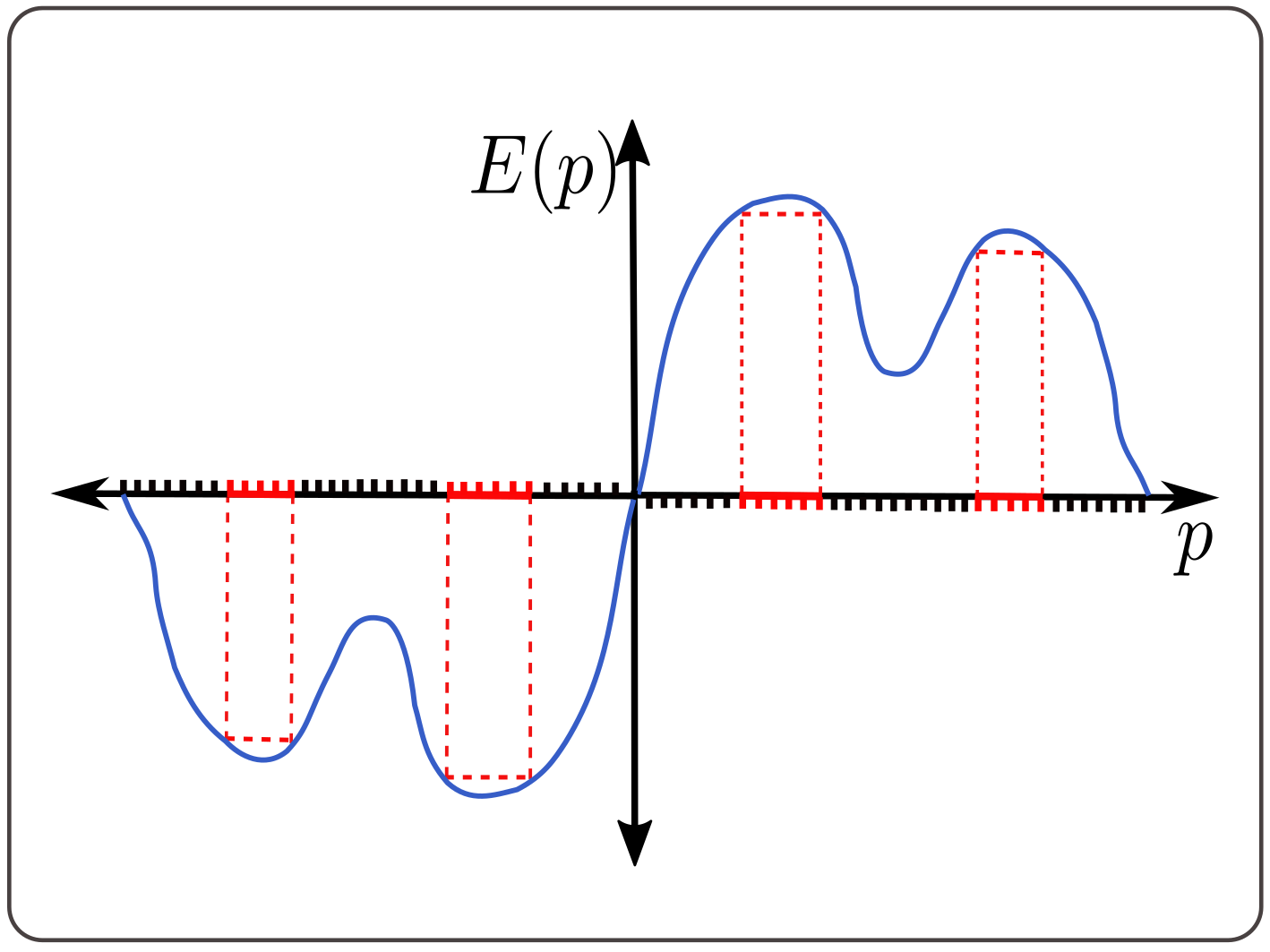}}
    \footnotesize{\caption[Dispersion Relation]{\bl{For a given dispersion relation of a lattice model, we need to truncate the state to avoid points where the density of states diverges.  These points correspond to turning points in the dispersion relation.  The picture shows an exaggerated version of the regions we would exclude for a one dimensional example, which are the red sections on the $p$ axis.  The number of states in a momentum window of fixed width scales linearly with $L$ for one dimensional systems.  More generally, the number of states in such a region scales linearly with $V$, the number of sites.}\label{fig:dispersion}}}
\end{figure}
At $E=\pm 1$ this is infinite.  However, we can truncate the state, neglecting all terms with $E\in[-1,-\cos(p_0))\cup(\cos(p_0),1]$, where we choose a fixed $p_0=2\pi k_0/L>0$ to be small.  This leads to a constant error in approximation of the state as there is a constant fraction of energy eigenstates with energy in this range.  Defining $P_0$ to be the projector onto the subspace corresponding to the omitted energy range, we get
\begin{equation}
\begin{split}
 \big(\|\ket{\phi}-(\openone-P_0)\ket{\phi}\|_2\big)^2 & = 1-\bra{\phi}(\openone-P_0)\ket{\phi}\\
 &= \bra{\phi}P_0\ket{\phi}\\
 &\leq N_0\frac{l}{L},
 \end{split}
\end{equation}
where we used $|\langle E\ket{\phi(0)}|^2\leq l/V$ from the previous section to get the last line.  And $N_0$ is the number of states with energy in the excluded set.  Next, we can use that $N_0=4k_0=2p_0L/\pi$ to get
\begin{equation}
 \big(\|\ket{\phi}-(\openone-P_0)\ket{\phi}\|_2\big)^2\leq \frac{2p_0l}{\pi}.
\end{equation}
So the error in approximating $\ket{\phi}$ by a state restricted to the smaller energy range can be made small by choosing $p_0$ to be small.  Furthermore, we have that
\begin{equation}
 \max_E n(E)= \frac{L}{\pi}\frac{1}{\abs{\sin(p_0)}}.
\end{equation}
Crucially, $p_0$ is fixed and does not depend on the number of sites.

The same ideas apply to other dispersion relations (for example, those arising from translationally invariant Hamiltonians, possibly in higher spatial dimensions).  See figure \ref{fig:dispersion}.  The basic idea is to exclude regions where the density of states diverges, which corresponds to points where the dispersion relation is flat.  The key point is that the density of states is a fixed function, and the fraction of energy eigenstates corresponding to regions where it is large remains constant.  (Notice that the trivial Hamiltonian $H=\mathrm{constant}$ has a completely flat dispersion relation, so that this trick will not work in that case.)  But generally for free lattice models we expect $\max_E n(E)\propto V$, where $V$ is the number of lattice sites.

\subsection{From single-mode to multi-mode measurements}\label{sec:From single-mode to multi-mode measurements}
It is possible to relate $K$ mode measurements to single-mode measurements if the state is Gaussian.  We will only prove this here for fermionic Gaussian states, as the bosonic analogue is similar.
\begin{repcorollary}{cor:143}
 Suppose $\rho=\rho(0)$ is Gaussian and satisfies $[\rho,N]=0$, where $N$ is the total number operator.  (This is still quite general, but it rules out BCS states.)  Let $M$ be a measurement operator acting on $K$ modes on $l$ sites, where the modes are local, but not necessarily near each other.  Then we have
 \begin{equation}
 \begin{split}
  & \avg{\abs{\tr{\rho M(t)}-\tr{\rho \av{M}}}}_T\\
  & \leq 2^{K+2}mlK\sqrt{c_3\left[\frac{D_G}{d}+ c_2\frac{n_{\max}}{T}\right]},
  \end{split}
 \end{equation}
with $m= \max|m_{r_1,...,r_{2K}}|$, where $m_{r_1,...,r_{2K}}$ are the coefficients of $M$ when expanded in a fermionic operator basis on the $K$ modes.  See equation (\ref{eq:64}).  And $c_2=4\sqrt{\pi}$ and $c_3=c_1 n_d^2s^2$, where $c_1=e\sqrt{\pi}/2$.
\end{repcorollary}
\begin{proof}
Define $c_{\alpha}$, with $\alpha\in\{1,...,2K\}$, to be $2K$ Majorana operators generating the algebra for the $K$ modes that $M$ acts on.  We can choose these $c_{\alpha}$ to make the covariance matrix simple, as we will see below.
We will work in the Heisenberg picture here, but suppress the time dependence and just write $c_{\alpha}$ instead of $c_{\alpha}(t)$.  We can expand $M$ in terms of these Majorana operators:
 \begin{equation}\label{eq:64}
 \begin{split}
  M =\!\!\sum_{r_1,...,r_{2K}=0,1}\!\!m_{r_1,...,r_{2K}}c^{r_1}_{1}...c^{r_{2K}}_{2K}.
  \end{split}
 \end{equation}
 Using the triangle inequality, we get
 \begin{equation}
 \begin{split}
  & \abs{\tr{\rho M(t)}-\tr{\rho \av{M}}}\leq\\
  & \!\!\sum_{r_1,...,r_{2K}=0,1}\!\!|m_{r_1,...,r_{2K}}|\abs{\tr{\rho c^{r_1}_{1}...c^{r_{2K}}_{2K}}-\tr{\rho \av{c^{r_1}_{1}...c^{r_{2K}}_{2K}}}}.
  \end{split}
 \end{equation}
 We choose the $c_{\alpha}$ such that the covariance matrix, defined by
 \begin{equation}
 \Gamma_{ij}=\frac{i}{2}\tr{\rho[c_{i},c_{j}]},
 \end{equation}
 is in block diagonal form \cite{Greplova13},
 \begin{equation}
  \Gamma=\bigoplus_{n=1}^{K}\left(\begin{matrix} 0 & \lambda_n \\ -\lambda_n & 0 \end{matrix}\right),
 \end{equation}
with $\lambda_n=\frac{i}{2}\tr{\rho[c_{2n-1},c_{2n}]}$.  Let $R$ be the bit string $(r_1,...,r_{2K})$, and define $2w=\sum_i r_i$.  Because the state is Gaussian, we have \cite{Greplova13}
\begin{equation}
 \tr{\rho c^{r_1}_{1}...c^{r_{2K}}_{2K}}=i^w\mathrm{pf}\left(\Gamma^R\right),
\end{equation}
where $\mathrm{pf}$ is the Pfaffian, and $\Gamma^R$ is a submatrix of $\Gamma$ restricted to the rows and columns labeled by $i$ corresponding to $r_i=1$.  The Pfaffian is zero if any row or column is zero, so $\mathrm{pf}\left(\Gamma^R\right)=0$, unless the string $R$ only contains consecutive pairs of ones and zeros, such as $(1,1,0,0,1,1,...)$.

The Pfaffian has two useful properties.  The first is that $\mathrm{pf}\left(\begin{smallmatrix} 0 & \lambda \\ -\lambda & 0 \end{smallmatrix}\right)=\lambda$, and the second is that $\mathrm{pf}(A_1\oplus ... \oplus A_n)=\mathrm{pf}(A_1)...\mathrm{pf}(A_n)$.  For any string $R$ giving a non zero Pfaffian, we get
\begin{equation}
 \tr{\rho c^{r_1}_{1}...c^{r_{2K}}_{2K}}=i^w\prod_{n=1}^{K}\lambda_n^{r_{2n}}.
\end{equation}
And so
\begin{equation}
\begin{split}
  &\abs{\tr{\rho c^{r_1}_{1}...c^{r_{2K}}_{2K}}-\tr{\rho\av{c^{r_1}_{1}...c^{r_{2K}}_{2K}}}}\\
  &\leq\abs{\prod_{n=1}^{K}\lambda_n^{r_n}-\av{\prod_{n=1}^{K}\lambda_n^{r_n}}}\\
  &\leq \abs{\prod_{n=1}^{K}\lambda_n^{r_n}-\prod_{n=1}^{K}\av{\lambda_n^{r_n}}}+\avg{\abs{\prod_{n=1}^{K}\lambda_n^{r_n}-\prod_{n=1}^{K}\av{\lambda_n^{r_n}}}}\\
  &\leq  \sum_{n=1}^{K}\abs{\lambda_n^{r_n}-\av{\lambda_n^{r_n}}}+\sum_{n=1}^{K}\avg{\abs{\lambda_n^{r_n}-\av{\lambda_n^{r_n}}}},
 \end{split}
\end{equation}
where the third line follows from the triangle inequality and $|\av{f(t)}|\leq \av{|f(t)|}$.  The last line follows from the triangle inequality, and repeated applications of $|xy-\av{x}\av{y}|\leq |x-\av{x}|+|y-\av{y}|$, which uses $|x|$, $|y|\leq 1$.  So we need to focus on
\begin{equation}
 \abs{\lambda_n-\av{\lambda_n}}=\abs{\frac{i}{2}\tr{\rho[c_{i},c_{j}]}-\frac{i}{2}\tr{\rho\av{[c_{i},c_{j}]}}},
\end{equation}
where $i=2n-1$ and $j=2n$.

For each pair $c_i$ and $c_j$, we can always define creation operators $d_i^{\dagger}$ and $d_j^{\dagger}$, with 
\begin{equation}
\begin{split}
 c_i & =i(d_i-d_i^{\dagger})\\
 c_j & =d_j+d_j^{\dagger}.
 \end{split}
\end{equation}
We are not assuming that these creation operators correspond to orthogonal modes.  Using $\tr{\rho d_id_j}=0$, which follows because $[\rho,N]=0$, we have
\begin{equation}
\begin{split}
 \frac{i}{2}\tr{\rho [c_{i},c_j]} & =\tr{\rho (d_i^{\dagger}d_j-d_id_j^{\dagger})}\\
 & =\tr{\rho (d_i^{\dagger}d_j+d_j^{\dagger}d_i)}-\kappa,
 \end{split}
\end{equation}
where $\kappa=\{d_i,d_j^{\dagger}\}$ is a constant.
We can rewrite this as
\begin{equation}
 \frac{i}{2}\tr{\rho [c_{i},c_j]}=\tr{\rho b_i^{\dagger}b_i}-\tr{\rho b_j^{\dagger}b_j}-\kappa,
\end{equation}
where $b_i=\f{1}{\sqrt{2}}(d_i+d_j)$ and $b_j=\f{1}{\sqrt{2}}(d_i-d_j)$.
This allows us to apply corollary \ref{th:76} to get
\begin{equation}
\begin{split}
 \avg{\abs{\frac{i}{2}\tr{\rho[c_{i},c_{j}]}-\frac{i}{2}\tr{\rho\av{[c_{i},c_{j}]}}}}_T &\\
 \leq 2l\sqrt{c_3\left[\frac{D_G}{d}+ c_2\frac{n_{\max}}{T}\right]}&,
\end{split}
\end{equation}
where $l$ is the number of sites the measurement acts on.  Putting this all together, we get
\begin{equation}
 \begin{split}
  & \abs{\tr{\rho M(t)}-\tr{\rho \av{M}}}\\
  & \leq 4m^{\prime}lK\sqrt{c_3\left[\frac{D_G}{d}+ c_2\frac{n_{\max}}{T}\right]},
  \end{split}
 \end{equation}
 where
 \begin{equation}
  m^{\prime}= \!\!\sum_{r_2,r_4,...,r_{2K}=0,1}\!\!|m_{r_2,r_2,r_4,r_4,...,r_{2K}}|,
 \end{equation}
 where the sum only counts terms labelled by consecutive pairs of ones and zeros.  Then, for simplicity, we can use $m^{\prime}\leq m2^K$, where $m=\max |m_{r_1,...,r_{2K}}|$, since there are $2^K$ terms that contribute.  It may also simplify things to use $K\leq ls$.  This gives
 \begin{equation}
 \begin{split}
  & \abs{\tr{\rho M(t)}-\tr{\rho \av{M}}}\\
  & \leq 2^{ls+2}msl^2\sqrt{c_3\left[\frac{D_G}{d}+ c_2\frac{n_{\max}}{T}\right]}.
  \end{split}
 \end{equation}
 \end{proof}

}

\end{document}